\newtheorem{thm}{Theorem}
\newtheorem{lem}[thm]{Lemma}
\newtheorem{defn}[thm]{Definition}
\def\ZZ{\mathbb{Z}}
\def\RR{\mathbb{R}}
\def\cal{\mathcal}
\def\bf{\mathbf}
\def\MPK{\mathsf{MPK}}
\def\MSK{\mathsf{MSK}}
\def\FRD{\mathsf{FRD}}
\def\SK{\mathsf{SK}}
\def\Tsk{\mathsf{Tsk}}
\def\CT{\mathsf{CT}}
\def\id{\mathsf{id}}
\def\SampleRight{\mathsf{SampleRight}}
\def\SamplePre{\mathsf{SamplePre}}
\def\SampleBasisLeft{\mathsf{SampleBasisLeft}}
\def\SampleBasisRight{\mathsf{SampleBasisRight}}
\def\ExtendBasis{\mathsf{ExtendBasis}}
\def\Decrypt{\mathsf{Decrypt}}
\def\Setup{\mathsf{Setup}}
\def\Derive{\mathsf{Derive}}
\def\Extract{\mathsf{Extract}}
\def\TskGen{\mathsf{TskGen}}
\def\TkVer{\mathsf{TkVer}}
\def\Encrypt{\mathsf{Encrypt}}
\def\Decrypt{\mathsf{Decrypt}}
\def\Round{\mathsf{Round}}
\def\ReRand{\mathsf{ReRand}}
\def\Invert{\mathsf{Invert}}
\def\Tsk{\mathsf{Tsk}}
\def\idr{\mathrm{Pr}} 
\def\Adv{\mathsf{Adv}}
\def\Anon{\textsf{ANON-sID-CPA}}
\def\IND{\textsf{IND-sID-CPA}}
\def\u{\bf{u}}
\def\A{\bf{A}}
\def\R{\bf{R}}
\def\B{\bf{B}}
\def\F{\bf{F}}
\def\D{\bf{D}}
\def\T{\bf{T}}
\def\G{\bf{G}}
\def\C{\bf{C}}
\def\U{\bf{U}}
\def\v{\bf{v}}
\def\z{\bf{z}}
\def\e{\bf{e}}
\def\m{\bf{m}}
\def\x{\bf{x}}
\def\k{\bf{k}}
\def\y{\bf{y}}
\def\negl{\mathsf{negl}}
\def\L{\Lambda}
\def\Lp{\Lambda^{\perp}}
\def\b{\bf{b}}
\def\s{\bf{s}}
\def\c{\bf{c}}
\def\LWE{\textsf{LWE}}
\def\SIVP{\textsf{SIVP}}
\def\GapSVP{\textsf{GapSVP}}
\begin{document}
	\title{Anonymous communication system provides a secure environment without leaking metadata, which has many application scenarios in IoT
	}
	\titlerunning{}
	\authorrunning{}
	\author{Ngoc Ai Van Nguyen\inst{1}\and  Minh Thuy Truc Pham\inst{2}}
	\institute{
		Department of Mathematics and Physics, University of Information Technology,\\
		 Vietnam National University, Ho Chi Minh city, Vietnam\\
		 \email{vannna@uit.edu.vn}
		\and 
	 Institute of Cybersecurity and Cryptology\\
		School of Computing and Information Technology, University of Wollongong\\
		Northfields Avenue, Wollongong NSW 2522, Australia\\	
		\email{pm.thuytruc@gmail.com}
	}

	\maketitle              
	
	\begin{abstract}
    Anonymous Identity Based Encryption (AIBET) scheme allows a tracer to use the tracing key to reveal the recipient's identity from the ciphertext while keeping other data anonymous. This special feature makes AIBET a promising solution to distributed IoT data security. In this paper, we construct an efficient quantum-safe Hierarchical Identity-Based cryptosystem with Traceable Identities (AHIBET) with fully anonymous ciphertexts. We prove the security of the AHIBET scheme under the Learning with Errors (LWE) problem in the standard model. 
	\end{abstract}

\section{Introduction}
Internet of Things (IoT) has emerged as a set of interconnected technologies like Wireless Sensors Networks (WSN) and Radio Frequency Identification (RFID), that provide identification, computation, and mutual information exchange among the connected devices all over the world. The key idea of the IoT is to obtain information about our environment to understand and control and act on it~\cite{DMR16}.

Identity-Based Encryption (IBE) is a type of public-key encryption where the public key is an arbitrary string that uniquely defined the user (for example an email address or a telephone number). The Private-key Generator (PKG) who has knowledge of a master secret key generates the private key for the corresponding identities. This concept was first introduced by Shamir~\cite{Shamir} and then in 2001, Boneh and Franklin~\cite{BF03} proposed the first practical IBE scheme based on bilinear map. 
The idea of Hierarchical Identity-Based Encryption (HIBE), which is an extension of IBE where each level can issue private keys for identites of the next level, was first proposed in the work of Gentry and Silverberg~\cite{GS02}. Since then, there have been many efficient constructions of HIBE, ranging from classical setting~\cite{BonehB04a,BonehBG05,Waters05} to post-quantum setting~\cite{ABB10-EuroCrypt,CHKP10,SRB14} just to name a few. 

The concept of ``Anonymous" IBE offers an additional privacy guarantee to standard IBE schemes where the ciphertexts do not leak the identity of the recipients. AIBE is a promising solution to anonymous communications and it can be applied to many realistic scenarios that provide privacy-preserving and security under cloud environment. It can also bring a secure environment without leaking metadata which has many application scenarios in the aforementioned distributed IoT system~\cite{Jiang18}.
However, the first AIBE construction of Boneh and Frankl in~\cite{BF03} is just anonymous in the random oracle model and it was a challenging problem to achieve anonymous IBE in the standard model until~\cite{BW06}. In~\cite{BW06HIBE}, Boyen and Waters proposed the first secure anonymous HIBE scheme without random oracles. More recently, the HIBE constructions in the post-quantum setting~\cite{ABB10-EuroCrypt,SRB14} are proven to be anonymous secure in the standard model in the mean of a ciphertext encrypted for a target identity is indistinguishable from a random element in the ciphertext space which helps hide this identity from any malicious attacker. 

Although this strong unconditional privacy seems very attractive from the user's point of view, it can potentially be a dangerous tool against public safety if there is no way to revoke such privacy when illegal behavior is detected. For example, in the case where the email filtering system has to filter out all encrypted email from members are suspected of illegal activity, standard anonymous IBE and HIBE prevent the system reveal the recipients of those ciphertexts. Traceability can provide a solution to this problem in which an additional traceability function can detect specific identities in ciphertexts and all the others remain anonymous. 

In 2019, Blazy et al.~\cite{Blazy19} first considered the traceability for identity-based encryption and constructed an Anonymous Identity Based Encryption (AIBET) scheme in the standard model but under the matrix Diffie Hellman (MDDH) assumption. Two security notions are formally defined in [4] are \textit{anonymity} and \textit{ciphertext indistinguishability}. Anonymity requires that someone without an associated user secret key or tracing key should not be able to guess the targeted identity. The notion of indistinguishability requires that no one can distinguish between a valid ciphertext and a random string from the ciphertext space even having access to the tracing key of the target identity. Recent, in~\cite{LiuTTMC21}, Liu et al. proposed a lattice-based construction for AIBET which is based on the anonymous IBE by Katsumata and Yamada~\cite{KY16}. However, they do not address the notion of indistinguishability which is the main difference between an AIBET and a standard anonymous IBE. Note that the role of the tracer and the Private-key Generator PKG are distinguishable where the tracer has less power than the PKG. For example, it could be a gateway that checks whether an email for a suspected illegal user is passed. Even if the tracers are corrupted, the privacy and the confidentiality of the system  will still remain intact.

\medskip
\noindent\textbf{Our contribution:} We propose a concrete construction of an Anonymous Lattice Hierarchical Identity-Based Encryption with Traceable Identities (AHIBET) scheme which is secure in the standard model based on the hardness assumption of lattices. In particular, our AHIBET construction is anonymous across all the levels of hierarchy, i.e., ciphertexts conceal recipients' identities from everyone which does not know the corresponding keys for decryption or tracing. Traceability cannot be extended down the hierarchy, i.e., knowing the tracing key for identity $\id$ doesn't imply knowing tracing key for any of its descendants. Besides, our construction is ciphertext indistinguishable, i.e., even having the tracing key for identity $\id$, one cannot distinguish the ciphertexts of message $\m$ from the one of random messages for identity $\id$. 

An instance of our AHIBET yields a lattice-based construction of AIBET that can be easily converted to a construction over ideal lattices using the techniques in~\cite{BertFRS18}, which outperforms the AIBET by Liu et al.~\cite{LiuTTMC21}\footnote{In fact, the public parameter in Liu et al.~\cite{LiuTTMC21} will be a factor of $dl$ greater than ours where $d$ is some fixed constant (e.g., d=2 or 3) and $l=\lceil n^{1/d}\rceil$ for $n$ the security parameter.}.

\medskip
\noindent\textbf{Technical Overview:}\\
The first main idea is that an AHIBET system must be controlled by three levels of trapdoors:
\begin{itemize}
    \item The master secret key $\MPK$ can be used to generate secret key $\SK_{\id}$ and tracing key $\Tsk_{\id}$ for each identity $\id$ of any level.
    
    \item The secret keys $\SK_{\id}$ enable recipients to decrypt the corresponding ciphertexts. Each secret key $\SK_{\id}$ can be used to generate the secret keys for identities of the next level and thus control all descendants of $\id$.
    
    \item The tracing keys $\Tsk_{\id}$ enable tracers to detect only the ciphertexts sent to identities $\id$ without leaking information of the messages. 
 \end{itemize}
 To achieve the identity traceability property, we attach each ciphertext a random tag and its encapsulation whereas tracing keys are the trapdoors for decapsulation.

We exploit the power of lattice trapdoors in \cite{MP12,CHKP10} combining with the HIBE construction by Agrawal et al.~\cite{ABB10-EuroCrypt} to achieve our AHIBET.

In \cite{ABB10-EuroCrypt}, each identity $\id$ is assigned a matrix $\F_{\id}$ and message $\m$ is encrypted following the dual-Regev scheme:
    $$ \c^T = \s^T \F_{\id} + \e^T ,  \ \ \ \c'^T = \s^T \U + \e'^T + \m^T \left\lfloor \displaystyle{\frac{q}{2}} \right\rfloor.$$
    
     In our scheme, we use one dual-Regev part to encrypt the message and another one to encapsulate the random tag to allow the ciphertext to reveal the recipients' identity from the tracing key holder. 

In \cite{MP12}, the authors introduced a so-called $\G$-trapdoor where $\G$ is a gadget matrix in $\ZZ_q^{n \times \omega}$.  A $\G$-trapdoor for matrix $\F$ is a matrix $\R\in\ZZ^{(m-\omega)\times \omega}$ such that $\F=[\A|\A\R+\bf{H}\G]$ for some invertible matrix $\bf{H}\in \ZZ^{n\times n}_q$ where $\A\in\ZZ^{n\times (m-\omega)}_q$.  The authors called it ``strong trapdoor" since a good basis $\T_{\F}\in \ZZ^{m \times m}$ for $\Lambda_q^{\perp}(\F)$ can be obtained from the knowledge of the matrix $\R$ but the reverse is hard. Moreover, with either $\R$ or $\T_{\F}$, one can easily generate a low norm matrix $\D_{\F}\in\ZZ^{n\times t}_q$ satisfying $\F.\D_{\F} = \U$ with respect to a given random matrix $\U\in\ZZ_q^{n\times t}$ using the sampling algorithms from~\cite{ABB10-EuroCrypt} and~\cite{MP12}. Since $\D_{\F}$ is a kind of weaker trapdoor than $\T_{\F}$, we can use such matrices $\R$, $\T_{\F}$ and $\D_{\F}$  as the three levels of trapdoors $\MSK$,  $\SK_{\id}$,  $\Tsk_{\id}$ respectively for a traceable  identity-based encryption where the matrix $\F$ is associated to an identity $\id=(\id_1,\ldots,\id_{\ell})$, namely, $\F= \F_{\id} = [\A|\A_1 + \FRD(\id_1)\G| \ldots| \A_{\ell}+\FRD(\id_{\ell})\G]$ for the public matrices $\A, \A_1,\ldots,\A_{\ell}$ and the full-rank difference encoding function $\FRD$. However, such trapdoors do not guarantee the anonymity and even the secrecy of messages across the hierarchy of identities. For example, knowing $\D_{\F_{\id_1}}$ and $\D_{\F_{[\id_1|\id_2]}}$, one can easily find a low norm matrix $\T$ of the same size as $\D_{\F_{\id_1|\id_2}}$ such that  $\F_{[\id_1|\id_2]} \T = \bf{0}$, which reveals information of the messages.
Therefore, we use a collision resistance hash function $\mathsf{H}$ to construct a matrix $\F'_{\id}=[\A|\A_0+\FRD(\mathsf{H}(\id))\G]$ and use the sampling algorithms to generate the tracing key $\D_{\F'_{\id}}$ of the identity $\id$ such that $\F'_{\id}\D_{\F'_{\id}}=\U$. Such tracing keys  are determined uniquely by the identities and independent of the secret keys, which ensures the anonymity and secrecy of the messages.



\section{Preliminaries}
\subsection{Anonymous Lattice Hierarchical Identity-Based Encryption with Traceable Identities (AHIBET)}
In this section, we describe the model of Anonymous Lattice Hierarchical Identity-Based Encryption with Traceable Identities (AHIBET) based on the Anonymous Lattice Identity-Based Encryption with Traceable Identities (AIBET) from~\cite{Blazy19} and its security model.

\begin{defn}[AHIBET] \label{def:AHIBET}
	An AHIBET scheme consists of the following seven algorithms:
	\begin{itemize}
		\item $\Setup(\lambda,d)$ takes as input the security parameter $\lambda$ and the maximal hierarchy depth $d$ of the scheme and outputs the master public key $\MPK$ and the master secret key $\MSK$.
		\item $\Extract(\MPK,\MSK,\id)$ uses the master public key $\MPK$ and the master secret key $\MSK$ to generate the secret key $\SK_{\id}$ for an identity $\id$ at depth $1$.  	
		\item $\Derive(\MPK,\SK_{\id}),(\id|\id_{\ell})$ takes as input the master public key $\MPK$ and a secret key $\SK_{\id}$ corresponding to an identity $\id$ at depth $\ell-1$, outputs the secret key $\SK_{\id|\id_{\ell}}$ for the identity $(\id|\id_{\ell})$ at depth $\ell$.
		\item $\TskGen(\MPK,\MSK,\id)$ uses the master public key $\MPK$ and the master secret key $\MSK$ to generate the tracing key $\Tsk_{\id}$ for a given identity $\id$.
		\item $\Encrypt(\MPK,\id,\m)$ takes as input the master public key $\MPK$, a given identity $\id$ and a message $\m$, outputs the ciphertext $\CT$.
		\item $\Decrypt(\MPK,\CT,\SK_{\id})$ takes as input the master public key $\MPK$, a ciphertext $\CT$ and a secret key $\SK_{\id}$. The algorithm outputs the message $\m$ if $\CT$ is encrypted for $\id$; otherwise, it outputs the rejection symbol $\perp$.
		\item $\TkVer(\MPK,\id,\Tsk_{\id},\CT)$ takes as input the master public key $\MPK$, an identity $\id$ and a ciphertext $\CT$, uses the tracing key $\Tsk_{\id}$ to check whether a ciphertext $\CT$ is encrypted for $\id$. $\TkVer$ outputs $1$ if $\CT$ is for the user with identity $\id$; otherwise, it outputs $0$.
	\end{itemize}
\end{defn}

\noindent\textbf{Correctness and soundness.}\\
The \textit{correctness} of AHIBET scheme requires that if for all key pairs $(\MPK,\MSK)$ output by $\Setup$, all $1\le \ell\le d$, all identities $\id=(\id_1,\dots,\id_{\ell})$ where $\id_i\in \ZZ^{n}_q\setminus\{\bf{0}\}$ and all messages $\m\in\{0,1\}^{\lambda}$, it holds that
	$$\Pr\left[ {\begin{gathered}
		\Decrypt(\MPK,\SK_{\id},\CT)=\m \end{gathered}  
		\left| \begin{gathered}
		(\MPK,\MSK)\gets\Setup(\lambda,d)\\
		\SK_{\id}\gets\Derive(\MPK,\MSK,\id)\\
		\CT\gets\Encrypt(\MPK,\id,\m)\\
		1\gets \TkVer(\MPK,\id,\Tsk_{\id},\CT)
		\end{gathered}  \right.} \right]=1$$
and the \textit{soundness} of AHIBET requires 
	$$\Pr\left[ {\begin{gathered}
		\Decrypt(\MPK,\SK_{\id},\CT)=\perp \end{gathered}  
		\left| \begin{gathered}
		(\MPK,\MSK)\gets\Setup(\lambda,d)\\
		\SK_{\id}\gets\Derive(\MPK,\MSK,\id)\\
		\CT\gets\Encrypt(\MPK,\id,\m)\\
		0\gets \TkVer(\MPK,\id,\Tsk_{\id},\CT)
		\end{gathered}  \right.} \right]=1$$

\noindent\textbf{Security models of AHIBET.} 
For the security models, we give the definition of anonymity and ciphertext indistinguishability for the AHIBET scheme.  
\begin{itemize}
    \item \textbf{Anonymity} is the property that the adversary can not distinguish the encryption of a chosen message for a first chosen identity from the encryption on the same message for a second chosen identity. Similarly, the adversary can not decide whether a ciphertext it received from the challenger  was encrypted for a chosen challenge identity, or for a random identity in the identity space. The anonymity game, denoted $\Anon$, is played between an adversary $\mathcal{A}$ and a challenger $\cal{C}$,  provided that the adversary $\cal{A}$ does not have the corresponding tracing key of the challenge identity, is defined through the following game:
\begin{itemize}
	\item \textbf{Init:} The adversary $\cal{A}$ is given the maximum depth of the hierarchy $d$ and then $\cal{A}$ decides a target pattern $\id^*=(\id_1^*,\dots,\id_{\ell}^*)$, $\ell\le d$. 
	\item \textbf{Setup:} At the beginning of the game, the challenger $\cal{C}$ runs $\Setup(\lambda,d)$ to obtain $(\MPK,\MSK)$ and gives the resulting master public key $\MPK$ to the adversary $\cal{A}$.
	\item  \textbf{Phase 1:} $\cal{A}$ may adaptively make queries polynomial many times to the key derivation oracle $\cal{O}_{\Derive}$ and the tracing key oracle $\cal{O}_{\TskGen}$ where:
	\begin{itemize}
	    \item Oracle $\cal{O}_{\Derive}(\id)$ takes input an identity $\id$ different from $\id^{\ast}$ and its prefixes, returns the output of $\Derive(\MPK,\MSK,\id)$.
	    \item Oracle $\cal{O}_{\TskGen}(\id)$ takes input an identity $\id$ different from $\id^{\ast}$, returns the output of $\TskGen(\MPK,\MSK,\id)$.
	\end{itemize}

	\item \textbf{Challenge:}
	The adversary $\cal{A}$ chooses a message $\m\in\{0,1\}^{\lambda}$ and gives it to the challenger $\cal{C}$. $\cal{C}$ then selects a random bit $b\in\{0,1\}$ and a random identity $\id'$ in the identity space which has the same depth with the challenge identity $\id^{\ast}$. If $b = 0$, $\cal{C}$ runs $\CT_0^{\ast}\gets \Encrypt(\MPK,\id^{\ast},\m)$; otherwise, it runs $\CT_1^{\ast}\gets \Encrypt(\MPK,\id',\m)$. Finally, $\cal{C}$ passes $(\MPK,\CT_{b}^*)$ through to the adversary $\cal{A}$.
	\item \textbf{Phase 2:} $\cal{A}$ continues to issue additional key derivation and tracing key queries and $\cal{C}$ responds as in \textbf{Phase 1}.
	\item \textbf{Guess:} $\cal{A}$ outputs its guess $b'\in \{0,1\}$ and wins if $b' = b$.
\end{itemize}

The advantage of $\cal{A}$ is defined as
$$\Adv_{\cal{A},\text{AHIBET}}^{\Anon}:=\left|\idr[b=b']-\frac{1}{2}\right|.$$

    \item In the \textbf{ciphertext indistinguishability game}, we use a privacy property called \textit{indistinguishable from random} which means that 
    the challenge ciphertext encrypted for a given message $\m^{\ast}$ is computationally indistinguishable from a the challenge ciphertext encrypted for a random message $\m$ on the same challenge identity $\id^{\ast}$, even the adversary $\cal{A}$ has the corresponding tracing key $\Tsk_{\id^{\ast}}$ of $\id^{\ast}$. The $\IND$ security model is defined through the following game, played between an adversary $\mathcal{A}$ and a challenger $\cal{C}$:
    \begin{itemize}
	\item \textbf{Init:} The adversary $\cal{A}$ is given the maximum depth of the hierarchy $d$ and then $\cal{A}$ decides a target pattern $\id^*=(\id_1^*,\dots,\id_l^*)$, $\ell\le d$. 
	\item \textbf{Setup:} At the beginning of the game, the challenger $\cal{C}$ runs $\Setup(\lambda,d)$ to obtain $(\MPK,\MSK)$ and gives the resulting master public key $\MPK$ to the adversary $\cal{A}$.
	\item  \textbf{Phase 1:} $\cal{A}$ may adaptively make queries polynomial many times to the key derivation oracle $\cal{O}_{\Derive}$ and the tracing key oracle $\cal{O}_{\TskGen}$ where:
	\begin{itemize}
	    \item Oracle $\cal{O}_{\Derive}(\id)$ takes input an identity  $\id$ different from $\id^{\ast}$ and its prefixes, returns the output of $\Derive(\MPK,\MSK,\id)$.
	    \item Oracle $\cal{O}_{\TskGen}(\id)$ takes input an identity  $\id$ different from $\id^{\ast}$, returns the output of $\TskGen(\MPK,\MSK,\id)$.
	\end{itemize}

	\item \textbf{Challenge:}
	The adversary $\cal{A}$ chooses a message $\m^{\ast}\in\{0,1\}^{\lambda}$ and gives it to the challenger $\cal{C}$. $\cal{C}$ sets $\m_0=\m^{\ast}$ and chooses a random message $\m_1$ in the message space. $\cal{C}$ then selects a random bit $b\in\{0,1\}$. If $b = 0$, $\cal{C}$ runs $\CT_0^{\ast}\gets \Encrypt(\MPK,\id^{\ast},\m_0)$; otherwise, it runs $\CT_1^{\ast}\gets \Encrypt(\MPK,\id^{\ast},\m_1)$. Finally, $\cal{C}$ passes $(\MPK,\CT_b^*)$ through to the adversary $\cal{A}$.
	\item \textbf{Phase 2:} $\cal{A}$ continues to issue additional key derivation and tracing key queries and $\cal{C}$ responds as in \textbf{Phase 1}.
	\item \textbf{Guess:} $\cal{A}$ outputs its guess $b'\in \{0,1\}$ and wins if $b' = b$.
    \end{itemize}

The advantage of $\cal{A}$ is defined as
$$\Adv_{\cal{A},\text{AHIBET}}^{\IND}:=\left|\idr[b=b']-\frac{1}{2}\right|.$$

\end{itemize}


\subsection{Lattices}
A lattice $\Lambda$ in $\ZZ^m$ is a set of all integer linear combinations of (linearly independent) basis vectors $\bf{B}=\{\b_1,\cdots,\b_n\}\in\ZZ^m$, i.e.,
$$\Lambda:=\left\{\sum_{i=1}^n\b_ix_i | x_i\in\ZZ~\forall i=1,\cdots,n \right\}\subseteq\ZZ^m.$$
We call $n$ the rank of $\L$ and if $n=m$ we say that $\L$ is a full rank lattice. In this paper, we mainly consider full rank lattices containing $q\ZZ^m$, called $q$-ary lattices,
\begin{align*}
\L_q(\A) &:= \left\{ \e\in\ZZ^m ~\rm{s.t.}~ \exists \bf{s}\in\ZZ_q^n~\rm{where}~\A^T\bf{s}=\bf{e}\mod q \right\}\\
\Lp_q(\A) &:= \left\{ \e\in\ZZ^m~\rm{s.t.}~A\e=\bf{0}\mod q \right\} 
\end{align*}
and translations of lattice  $\Lp_q(\A)$ defined as follows
$$\L_q^{\bf{u}}(\A) :=  \left\{ \e\in\ZZ^m~\rm{s.t.}~A\e=\bf{u}\mod q \right\}$$
for  given matrices $\A\in\ZZ^{n\times m}$ and $\bf{u}\in\ZZ_q^n$.

Let $\bf{S}=\{\s_1,\cdots,\s_k\}$ be a set of vectors in $\mathbb{R}^m$. We denote by $\|\bf{S}\|:=\max_{1 \le i \le k} \|\s_i\|$  the maximum $\ell_2$ length of the vectors in $\bf{S}$. We also denote $\tilde{\bf{S}}:=\{\tilde{\s}_1,\cdots,\tilde{\s}_k \}$ the Gram-Schmidt orthogonalization of the vectors $\s_1,\cdots,\s_k$ in that order. We refer to $\|\tilde{\bf{S}}\|$ the Gram-Schmidt norm of $\bf{S}$.

Note that for any matrix $\bf{B}\in\ZZ^{n\times m}$, there exists a singular value decomposition $\bf{B}=\bf{Q}\bf{D}\bf{P}^T$, where $\bf{Q}\in\RR^{n\times n}$, $\bf{P}\in \RR^{m\times m}$ are orthogonal matrices, and $\bf{D}\in\RR^{n\times m}$ is a diagonal matrix with nonnegative entries $s_i\ge 0$ on the diagonal, in non-increasing order. The $s_i$ are called the \textit{singular values} of $\bf{B}$. Under this convention, $\bf{D}$ is uniquely determined and $s_1(\bf{B})=\max_{\bf{u}}\|\bf{B}\bf{u}\|=\max_{\bf{u}}\|\bf{B}^T\bf{u}\|\ge \|\bf{B}\|,\|\bf{B}^T\|$ where the maxima are taken over all unit vectors $\u\in\RR^m$. Note that the singular values of $\bf{B}$ and $\bf{B}^T$ are the same.\\
    
\medskip
\noindent\textbf{Gaussian distribution.} We will use the following definitions of the discrete Gaussian distributions.

\begin{defn}
	Let $\L\subseteq\ZZ^m$ be a lattice. For a vector $\bf{c}\in\RR^m$ and a positive parameter $\sigma\in\RR$, define:
	$$\rho_{\sigma,\c}(\x)=\exp\left(- \pi\frac{\|\x-\c\|^2}{\sigma^2}\right)\quad\text{and}\quad
	\rho_{\sigma,\c}(\L)=\sum_{\x\in\L}\rho_{\sigma,\c}(\x).    $$
	The discrete Gaussian distribution over $\L$ with center $\c$ and parameter $\sigma$ is
	$$\forall \bf{y}\in\L\quad,\quad\cal{D}_{\L,\sigma,\c}(\bf{y})=\frac{\rho_{\sigma,\c}(\bf{y})}{\rho_{\sigma,\c}(\L)}.$$
\end{defn}
For convenience, we will denote by $\rho_\sigma$ and $\cal{D}_{\L,\sigma}$ for $\rho_{\sigma,\bf{0}}$ and $\cal{D}_{\L,\sigma,\bf{0}}$ respectively. When $\sigma=1$ we will write $\rho$ instead of $\rho_1$.

 It is well-known that for a vector $\bf{x}$ sampled from $\cal{D}^m_{\ZZ,\sigma}$, one has that $\|\bf{x}\|\le  \sigma\sqrt{m}$ with overwhelming probability.
\begin{lem}\label{lem:Normofsample} For $\x\gets\cal{D}_{\Lambda_q^{\bf{u}}(\A),\sigma}$, $\Pr[\|\x\|>\sigma\sqrt{m}]\le\rm{negl}(n)$.
\end{lem}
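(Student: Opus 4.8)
\noindent\emph{Proof idea.}
The plan is to deduce this directly from the standard discrete-Gaussian tail bound, which itself rests on Banaszczyk's inequality together with the smoothing lemma. Set $\Lambda := \Lambda_q^{\perp}(\A)$; since $\Lambda \supseteq q\ZZ^m$ it is a full-rank lattice in $\RR^m$, and $\Lambda_q^{\u}(\A)$ is a coset of it, say $\Lambda_q^{\u}(\A) = \Lambda + \c$ for a fixed $\c \in \ZZ^m$ (if this coset is empty there is nothing to prove). Throughout the paper $\sigma$ is chosen large enough that $\sigma \ge \eta_{\epsilon}(\Lambda)$ for some $\epsilon = \epsilon(n) = \negl(n)$ --- this is automatic, since $\sigma$ is always a super-logarithmic multiple of a Gram--Schmidt norm while $\eta_{\epsilon}(\Lambda_q^{\perp}(\A)) = O(\sqrt{m})$ for $q = \mathrm{poly}(n)$ --- and I will use this.

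The key input is the coset form of the Banaszczyk tail bound: for a full-rank lattice $\Lambda\subseteq\RR^m$ and $\sigma \ge \eta_{\epsilon}(\Lambda)$,
\[
\Pr_{\x\gets\cal{D}_{\Lambda+\c,\sigma}}\!\bigl[\,\|\x\| > \sigma\sqrt{m}\,\bigr]
\;=\;\frac{\rho_{\sigma}\!\left((\Lambda+\c)\setminus \sigma\sqrt{m}\,\mathcal{B}_m\right)}{\rho_{\sigma}(\Lambda+\c)}
\;\le\; \frac{1+\epsilon}{1-\epsilon}\cdot 2^{-m},
\]
where $\mathcal{B}_m$ is the origin-centered Euclidean unit ball. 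Internally this is proved by bounding the numerator via Banaszczyk's estimate $\rho_{\sigma}(\Lambda\setminus\sigma\sqrt m\,\mathcal{B}_m)\le (\sqrt{2\pi e}\,e^{-\pi})^m\rho_{\sigma}(\Lambda) < 2^{-m}\rho_{\sigma}(\Lambda)$, extended to the coset (which the smoothing condition permits), and lower-bounding the denominator by $\rho_{\sigma}(\Lambda+\c)\ge\tfrac{1-\epsilon}{1+\epsilon}\rho_{\sigma}(\Lambda)$, again via smoothing. Applying the displayed bound with $\Lambda=\Lambda_q^{\perp}(\A)$ and $\c$ the chosen coset representative yields $\Pr[\|\x\|>\sigma\sqrt m]\le \tfrac{1+\epsilon}{1-\epsilon}2^{-m}$.

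Finally, since the ambient dimension satisfies $m=m(n)=\Omega(n\log q)$ and $\epsilon = \negl(n)$, the right-hand side is $2^{-\Omega(m)}(1+o(1)) = \negl(n)$, which proves the claim. The only genuinely delicate point --- and the step I would be most careful about --- is the passage from $\Lambda$ to its coset $\Lambda+\c$: without the hypothesis $\sigma\ge\eta_{\epsilon}(\Lambda)$ the statement is false (for tiny $\sigma$ the distribution concentrates on a single coset point whose norm may greatly exceed $\sigma\sqrt m$), so this smoothing assumption is essential both for the coset tail bound and for the lower bound on $\rho_{\sigma}(\Lambda+\c)$. One may avoid reproving Banaszczyk's estimate by invoking the packaged statements in the lattice literature, e.g.\ \cite{MP12} and the references therein.
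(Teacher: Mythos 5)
Your proof is correct and is precisely the standard Banaszczyk/Micciancio--Regev coset tail-bound argument that the paper implicitly invokes: the paper states Lemma~\ref{lem:Normofsample} with no proof at all, introducing it as a well-known fact. Your remark that the statement silently requires $\sigma \ge \eta_{\epsilon}(\Lambda_q^{\perp}(\A))$ is the one substantive point worth retaining --- as literally written, for arbitrary $\sigma$, the lemma is false --- but the hypothesis does hold everywhere the paper uses it, since $\sigma$ is always at least an $\omega(\sqrt{\log m})$ multiple of a Gram--Schmidt norm of a basis (or $r=\alpha q\ge 2\sqrt{n}$ for the error vectors drawn over $\ZZ^m$), which dominates the smoothing parameter of the relevant lattice with overwhelming probability.
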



\begin{lem}\label{lem:ClosetoUniform}
For a prime $q$ and a positive integer $n$, let $m\ge n\lceil\log q\rceil$. For $\A\gets \ZZ^{n\times m}_q$, $\bf{r}\gets \cal{D}^m_{\ZZ,\sigma}$ with $\sigma\ge \omega(\sqrt{\log n})$, the distribution of $\bf{u}=\A\bf{r}\in\ZZ^{n}_q$ is statistically close to the uniform distribution over $\ZZ^{n}_q$.\\
Furthermore, fix $\bf{u}\in \ZZ^n_q$, the distribution of $\bf{r}$ conditioned on $\A\bf{r}=\bf{u}$ is $\cal{D}_{\L_q^{\bf{u}}(A),\sigma}$.
\end{lem}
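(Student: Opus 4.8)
The natural route is through the \emph{smoothing parameter} of the $q$-ary lattice $\Lp_q(\A)$. Recall that for a lattice $\L$ and $\epsilon>0$, $\eta_{\epsilon}(\L)$ denotes the least $\sigma>0$ with $\rho_{1/\sigma}(\L^{*}\setminus\{\bf{0}\})\le\epsilon$, and the fundamental smoothing fact (Micciancio--Regev; see also Gentry--Peikert--Vaikuntanathan) states that whenever $\sigma\ge\eta_{\epsilon}(\L)$, the distribution of $\bf{r}\bmod\L$ for $\bf{r}\gets\cal{D}^{m}_{\ZZ,\sigma}$ is within statistical distance $\epsilon/2$ of the uniform distribution on the finite group $\ZZ^{m}/\L$. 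The plan is therefore: (i) reduce the first claim to a statement about $\bf{r}\bmod\Lp_q(\A)$; (ii) bound $\eta_{\epsilon}(\Lp_q(\A))$ for a suitable negligible $\epsilon$; (iii) settle the conditional-distribution (``furthermore'') part, which turns out to be immediate.

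For (i), note that when $m\ge n\lceil\log q\rceil$ and $\A\gets\ZZ^{n\times m}_q$ with $q$ prime, $\A$ has rank $n$ over $\ZZ_q$ except with negligible probability; on that event the map $\bf{r}\mapsto\A\bf{r}\bmod q$ descends to a group isomorphism $\ZZ^{m}/\Lp_q(\A)\cong\ZZ^{n}_q$, so it suffices to show $\bf{r}\bmod\Lp_q(\A)$ is statistically close to uniform over $\ZZ^{m}/\Lp_q(\A)$. For (ii), I would invoke the standard bound $\eta_{\epsilon}(\Lp_q(\A))\le\sqrt{\ln(2m(1+1/\epsilon))/\pi}\,/\,\lambda_{1}^{\infty}\big(\Lp_q(\A)^{*}\big)$ together with the identity $\Lp_q(\A)^{*}=\tfrac{1}{q}\L_q(\A)$. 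A union bound over the $q^{n}-1$ nonzero vectors $\A^{T}\bf{s}$ with $\bf{s}\in\ZZ_q^{n}$, each of which is uniform over $\ZZ_q^{m}$ because $q$ is prime, shows that with overwhelming probability over $\A$ every nonzero vector of $\L_q(\A)$ has $\ell_{\infty}$-norm $\Omega(q)$; hence $\lambda_{1}^{\infty}\big(\Lp_q(\A)^{*}\big)=\Omega(1)$ and $\eta_{\epsilon}(\Lp_q(\A))=O\big(\sqrt{\log(m/\epsilon)}\big)$. Since $\sigma\ge\omega(\sqrt{\log n})$, one may then choose $\epsilon=\epsilon(n)$ negligible with $\eta_{\epsilon}(\Lp_q(\A))\le\sigma$, whereupon the overall statistical distance to uniform is at most $\epsilon/2$ plus $\Pr[\A\text{ is bad}]$, which is negligible. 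I expect this bookkeeping --- picking $\epsilon$ negligible while still guaranteeing $\eta_{\epsilon}(\Lp_q(\A))\le\sigma$, and merging it with the negligible ``bad $\A$'' event without circular dependence on $\A$ --- to be the only genuinely delicate point; the rest is routine.

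For the ``furthermore'' claim no genericity is required and the argument is purely definitional: for $\bf{r}\gets\cal{D}^{m}_{\ZZ,\sigma}$, any fixed $\bf{u}$ in the support of $\A\bf{r}$, and any $\bf{v}$ with $\A\bf{v}=\bf{u}$,
$$\Pr\big[\bf{r}=\bf{v}\mid\A\bf{r}=\bf{u}\big]=\frac{\rho_{\sigma}(\bf{v})}{\sum_{\bf{w}\in\L^{\u}_q(\A)}\rho_{\sigma}(\bf{w})}=\cal{D}_{\L^{\u}_q(\A),\sigma}(\bf{v}),$$
which is exactly the definition of $\cal{D}_{\L^{\u}_q(\A),\sigma}$, and this identity holds verbatim for every fixed matrix $\A$.
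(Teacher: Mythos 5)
The paper does not actually prove this lemma; it is imported as a standard regularity/preimage-sampling fact (it is essentially Lemma 5.2 and Corollary 5.4 of Gentry--Peikert--Vaikuntanathan, built on Micciancio--Regev smoothing), so there is no in-paper argument to compare against. Your proposed route is exactly the standard one, and both the reduction to $\bf{r}\bmod\Lp_q(\A)$ and the purely definitional treatment of the ``furthermore'' clause are correct.

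One quantitative step deserves more care than you give it. With the lemma's stated hypothesis $m\ge n\lceil\log q\rceil$ we only have $q^{n}\le 2^{m}$, so if your union bound targets the usual threshold $\|\A^{T}\bf{s}\bmod^{\pm}q\|_{\infty}\ge q/4$ (per-coordinate failure probability about $1/2$), the bound $q^{n}\cdot 2^{-m}$ is merely $O(1)$, not negligible --- this is why GPV state their $\lambda_{1}^{\infty}$ lemma with the stronger hypothesis $m\ge 2n\lg q$. The fix is either to weaken the threshold to $q/c$ for a constant $c>4$, so the per-$\bf{s}$ probability becomes roughly $(2/c)^{m}$ and the union bound $q^{n}(2/c)^{m}\le(4/c)^{m}$ is exponentially small while still yielding $\lambda_{1}^{\infty}(\Lp_q(\A)^{*})\ge 1/c=\Omega(1)$, or to note that the scheme's actual parameter choice is $m\ge n\log q+\omega(\log n)$, under which the $q/4$ version goes through. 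Since you wrote ``$\Omega(q)$'' the constant-threshold repair is available to you, but as literally sketched the step is not justified under the hypothesis as written, so you should make the choice of constant explicit. Everything else --- the rank-$n$ event, the identity $\Lp_q(\A)^{*}=\tfrac{1}{q}\L_q(\A)$, the choice of a negligible $\epsilon$ compatible with $\sigma\ge\omega(\sqrt{\log n})$, and the conditional-distribution computation --- is fine.
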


The security of our construction reduces to the LWE (Learning With Errors) problem introduced by Regev~\cite{Regev05}.

\begin{defn}[Learning With Errors - LWE problem]
	Consider a prime $q$, a positive integer $n$, and a distribution $\chi$ over $\ZZ_q$. An
	$\LWE_{n,m,q,\chi}$ problem instance consists of access to an unspecified challenge oracle $\cal{O}$, being either a noisy pseudorandom sampler $\cal{O}_\s$ associated with a secret $\s\in\ZZ_q^n$, or a truly random sampler $\cal{O}_\$$ who behaviors are as follows:
	\begin{description}
		\item[$\cal{O}_\s$:] samples of the form $(\bf{a}_i,b_i)=(\bf{a}_i,\s^T\bf{a}_i+e_i)\in\ZZ_q^n\times\ZZ_q$ where $\s\in\ZZ_q^n$ is a uniform secret key, $\bf{a}_i\in\ZZ_q^n$ is uniform and $e_i\in\ZZ_q$ is a noise withdrawn from $\chi$.
		\item[$\cal{O}_\$$:] samples are uniform pairs in $\ZZ_q^n\times\ZZ_q$.
	\end{description}\
	The $\LWE_{n,m,q,\chi}$ problem allows respond queries to the challenge oracle $\cal{O}$. We say that an algorithm $\cal{A}$ decides the $\LWE_{n,m,q,\chi}$ problem if 
	$$\Adv_{\cal{A}}^{\LWE_{n,m,q,\chi}}:=\left|\Pr[\cal{A}^{\cal{O}_\s}=1] - \Pr[\cal{A}^{\cal{O}_\$}=1] \right|$$    
	is non-negligible for a random $\s\in\ZZ_q^n$.
\end{defn}

Regev~\cite{Regev05} showed that (see Theorem~\ref{thm:LWE} below) when $\chi$ is a distribution $\overline{\Psi}_\alpha$ with $\alpha\in(0,1)$, the LWE problem is hard. 

\begin{thm}\label{thm:LWE}
	If there exists an efficient, possibly quantum, algorithm for deciding the $\LWE_{n,m,q,\overline{\Psi}_\alpha}$ problem for $q>2\sqrt{n}/\alpha$ then there is an efficient quantum algorithm for approximating the $\SIVP$ and $\GapSVP$ problems, to within $\tilde{\cal{O}}(n/\alpha)$ factors in the $\ell_2$ norm, in the worst case.
\end{thm}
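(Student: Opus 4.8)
The plan is to appeal to the worst-case to average-case reduction of Regev~\cite{Regev05}: the statement above is, up to cosmetics, the main theorem of that paper, so rather than reprove it from scratch I would recall the structure of the quantum reduction and point out where the hypothesis $q>2\sqrt{n}/\alpha$ is used. The first ingredient is a search-to-decision reduction internal to LWE. Since $q$ is prime and polynomially bounded, a distinguisher for $\LWE_{n,m,q,\overline{\Psi}_\alpha}$ yields an algorithm recovering the secret $\s\in\ZZ_q^n$ one coordinate at a time: for coordinate $j$ one guesses $s_j\in\ZZ_q$, re-randomizes each sample $(\bf{a}_i,b_i)$ by adding a uniformly random multiple of the $j$-th unit vector to $\bf{a}_i$ (correcting $b_i$ under the guess), and calls the distinguisher, which reports ``pseudorandom'' precisely for the correct guess; a routine hybrid argument then promotes this to a worst-case search oracle over all secrets and all error shifts.

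The core of the argument is an iterative step on an arbitrary $n$-dimensional lattice $\Lambda$, alternating a classical and a quantum phase. In the classical phase, given polynomially many samples from the discrete Gaussian $\cal{D}_{\Lambda^{*},r}$ on the dual lattice, with $r$ above $\sqrt{2}\,q$ times the smoothing parameter $\eta_\epsilon(\Lambda)$, together with the search-LWE oracle, one solves bounded-distance decoding on $\Lambda$ for targets within distance $O(\alpha q / r)$: a noisy lattice point and a dual Gaussian vector combine into an LWE instance whose label encodes the inner product of the BDD error with a short dual vector, i.e.\ an (appropriately discretized) Gaussian of width $\asymp \|\e\|\,r$, so the oracle returns the correct lattice coefficients. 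In the quantum phase, a BDD solver for $\Lambda$ within distance $d$ is used to prepare a state that is statistically close to a Gaussian-weighted superposition over $\Lambda$ of width $\asymp 1/d$ (the BDD solver uncomputing the which-coset register), and applying the quantum Fourier transform and measuring produces a fresh sample from $\cal{D}_{\Lambda^{*},r'}$ with $r'\asymp \sqrt{n}/d\asymp r\cdot c\sqrt{n}/(\alpha q)$. The multiplicative factor $c\sqrt{n}/(\alpha q)$ is strictly less than $1$ exactly because $q>2\sqrt{n}/\alpha$; hence, bootstrapping from a trivially wide discrete Gaussian on $\Lambda^{*}$ — which is sampled with no oracle at all — and iterating polynomially many times drives $r$ down to within a $\tilde{O}(n/\alpha)$ factor of the smoothing/transference bound.

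Finally, from these near-optimally narrow discrete Gaussian samples the worst-case conclusions follow: by Lemma~\ref{lem:Normofsample} each sample has length $O(r\sqrt{m})$, and polynomially many of them span $\RR^n$ with overwhelming probability, giving $n$ linearly independent lattice vectors of length $\tilde{O}(n/\alpha)\cdot\lambda_n(\Lambda)$, i.e.\ a solution to $\SIVP_{\tilde{O}(n/\alpha)}$, while $\GapSVP_{\tilde{O}(n/\alpha)}$ is decided by running the same machinery and testing which side of the promise $\Lambda$ lies on. The step I expect to be the genuine obstacle is the quantum phase together with the error bookkeeping: one must build the lattice superposition with additive error small enough that the Fourier transform is not spoiled, and then track the accumulation of that error and of the statistical distances over all iterations, so that the final Gaussian parameter — hence the approximation factor — is kept within the claimed $\tilde{O}(n/\alpha)$; controlling this is precisely what forces the quantitative hypothesis $q>2\sqrt{n}/\alpha$, which makes every iteration contractive and the whole process terminate in polynomially many rounds.
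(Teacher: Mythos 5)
This theorem is not proved in the paper at all: it is imported verbatim from Regev's work (cited as~\cite{Regev05}), so there is no in-paper argument to compare against. Your sketch is an accurate summary of Regev's actual reduction --- search-to-decision for prime polynomial $q$, the alternating classical (Gaussian-samples-plus-LWE-oracle $\Rightarrow$ BDD) and quantum (BDD $\Rightarrow$ narrower dual Gaussian via Fourier transform) iterative step made contractive by $q>2\sqrt{n}/\alpha$, and the extraction of $\SIVP$/$\GapSVP$ solutions at the end --- with only a cosmetic slip (the sample length should be $O(r\sqrt{n})$ for the $n$-dimensional worst-case lattice, not $O(r\sqrt{m})$), so it is consistent with the intended, cited proof.
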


The theorem implies, for $n/\alpha$ is a polynomial in $n$, the LWE problem is as hard as approximating the SIVP and GapSVP problems in lattices of dimension $n$ to within polynomial (in $n$) factors.

In this paper, we will use the discrete Gaussian distribution $\cal{D}^m_{\ZZ,\sigma}$ and denote $\LWE_{n,m,q,\sigma}$ instead of $\LWE_{n,m,q,\cal{D}^{m}_{\ZZ,\sigma}}$ for convenience.

We use the following lemma which was introduced by Katsumata and Yamada in~{\cite{KY16}} to rerandomize $\LWE$ instances:
\begin{lem}\label{lem:ReRand}
Let $\ell,q,m$ be positive integers and let $r$ be a positive real number satisfying  $r\ge \max\{\omega(\sqrt{\log m}),\omega(\sqrt{\log \ell})\}$. Let $\b\in\ZZ^{m}_q$ be arbitrary and $\z\gets \cal{D}^m_{\ZZ,r}$. Then there exists an efficient algorithm $\ReRand$ such that for any $\D\in \ZZ^{m\times \ell}$ and positive real $\sigma\ge s_1(\D)$, the output of $\ReRand(\D,\b^T+\z^T,r,\sigma)$ is distributed as $\b'^T=\b^T\D+\z'^T\in\ZZ^{\ell}_q$ where the distribution of $\z'$ is close to $\cal{D}^{\ell}_{\ZZ,2r\sigma}$.
\end{lem}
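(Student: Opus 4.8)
\smallskip
\noindent\emph{Proof sketch (proposal).}
The plan is to realise $\ReRand$ as a ``multiply--then--flood'' procedure and to reduce the analysis to a single application of a Gaussian convolution (linear pushforward) lemma. On input $(\D,\c^T=\b^T+\z^T,r,\sigma)$ the algorithm computes $\c^T\D=\b^T\D+\z^T\D\in\ZZ_q^{\ell}$ and adds a freshly sampled, \emph{shaped} Gaussian correction $\bf{f}\in\ZZ^{\ell}$, outputting $\b'^T:=\c^T\D+\bf{f}^T$. Setting $\z'^T:=\z^T\D+\bf{f}^T$, the output already has the claimed form $\b^T\D+\z'^T$; the whole content of the lemma is then to pick the covariance of $\bf{f}$ so that the marginal law of $\z'$ (with no residual correlation with $\z$) lies within $\negl(n)$ statistical distance of $\cal{D}^{\ell}_{\ZZ,2r\sigma}$.

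First I would fix the shape of the flooding term. Put $\Sigma:=(2r\sigma)^2\,\mathbf{I}_{\ell}-r^2\D^T\D$ and sample $\bf{f}\gets\cal{D}_{\ZZ^{\ell},\sqrt{\Sigma}}$. The matrix $\Sigma$ is positive definite: since $\sigma\ge s_1(\D)$, every eigenvalue of $\D^T\D$ is at most $\sigma^2$, hence $r^2\D^T\D\preceq r^2\sigma^2\,\mathbf{I}_{\ell}$ and $\Sigma\succeq 3r^2\sigma^2\,\mathbf{I}_{\ell}\succ 0$. This is exactly where $\sigma\ge s_1(\D)$ is used, and the factor $2$ in $2r\sigma$ is precisely what leaves the comfortable slack $3r^2\sigma^2\,\mathbf{I}_{\ell}$. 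That slack is used twice: it lets one sample $\cal{D}_{\ZZ^{\ell},\sqrt{\Sigma}}$ efficiently via the perturbation technique of~\cite{MP12} (draw a continuous Gaussian of covariance $\Sigma-\beta^2\mathbf{I}_{\ell}$ for a fixed $\beta=\omega(\sqrt{\log\ell})$ and round through $\cal{D}_{\ZZ^{\ell},\beta}$), which is legitimate because $\sqrt{3}\,r\sigma\ge\beta$ whenever $r\ge\omega(\sqrt{\log\ell})$ and $s_1(\D)\ge 1$ (the case $\D=\mathbf{0}$ being trivial); and it provides the smoothing room needed below.

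Next I would analyse the law of $\z'$ through one convolution step, crucially \emph{without} any rank assumption on $\D$. Let $\mathbf{M}:=\begin{pmatrix}\D\\\mathbf{I}_{\ell}\end{pmatrix}\in\ZZ^{(m+\ell)\times\ell}$, so that $\z'=\mathbf{M}^T\begin{pmatrix}\z\\\bf{f}\end{pmatrix}$ with $\begin{pmatrix}\z\\\bf{f}\end{pmatrix}$ a discrete Gaussian over $\ZZ^{m+\ell}$ of block-diagonal parameter $\mathbf{R}:=\mathrm{diag}\!\left(r\,\mathbf{I}_m,\ \sqrt{\Sigma}\right)$. One checks $\mathbf{M}^T\ZZ^{m+\ell}=\D^T\ZZ^m+\ZZ^{\ell}=\ZZ^{\ell}$ and $\mathbf{M}^T\mathbf{R}^2\mathbf{M}=r^2\D^T\D+\Sigma=(2r\sigma)^2\,\mathbf{I}_{\ell}$, so the image-of-a-discrete-Gaussian form of the convolution lemma (see~\cite{MP12,Regev05}) yields that $\z'$ is statistically close to $\cal{D}^{\ell}_{\ZZ,2r\sigma}$ as soon as $2r\sigma\ge\eta_\epsilon(\ZZ^{\ell})$ and the kernel sublattice $\Lambda_0:=\{(\z,-\D^T\z):\z\in\ZZ^m\}\cong\ZZ^m$ is smoothed by $\mathbf{R}$, i.e. $\eta_\epsilon(\mathbf{R}^{-1}\Lambda_0)\le 1$ for a negligible $\epsilon$. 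The first condition holds trivially, since $2r\sigma\ge 2r\ge\omega(\sqrt{\log\ell})$.

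The single genuine computation --- and the main obstacle to getting away with only $r\ge\omega(\sqrt{\log m})$ --- is the kernel smoothing. For $(\z,-\D^T\z)\in\Lambda_0$ one has
\[
\left\|\mathbf{R}^{-1}\begin{pmatrix}\z\\-\D^T\z\end{pmatrix}\right\|^{2}=r^{-2}\|\z\|^{2}+\z^{T}\D\,\Sigma^{-1}\D^{T}\z ,
\]
and diagonalising $\D^T\D$ via a singular value decomposition of $\D$ shows $\D\,\Sigma^{-1}\D^{T}\preceq\tfrac{1}{3r^{2}}\,\mathbf{I}_m$, because on $[0,\sigma^{2}]$ the map $t\mapsto t/\!\left((2r\sigma)^{2}-r^{2}t\right)$ is increasing with maximum $\sigma^{2}/(3r^{2}\sigma^{2})=1/(3r^{2})$. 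Hence $\|\mathbf{R}^{-1}(\z,-\D^{T}\z)\|\le\tfrac{2}{\sqrt{3}\,r}\|\z\|$ for all $\z$, so a basis of $\mathbf{R}^{-1}\Lambda_0$ has Gram--Schmidt norm at most $\tfrac{2}{\sqrt{3}\,r}$, and by the standard bound $\eta_\epsilon(\mathbf{R}^{-1}\Lambda_0)\le\tfrac{2}{\sqrt{3}\,r}\cdot\omega(\sqrt{\log m})\le 1$ under $r\ge\omega(\sqrt{\log m})$ (choosing the hidden $\omega(\cdot)$ large enough relative to the desired negligible $\epsilon$). Feeding this back into the convolution lemma gives that $\z'$ is within $\negl(n)$ statistical distance of $\cal{D}^{\ell}_{\ZZ,2r\sigma}$, which is the assertion. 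I would close with the remark that if only the weaker ``continuous in the middle'' version is needed, the same estimates apply with $\cal{D}_{\ZZ^{\ell},\sqrt{\Sigma}}$ replaced by a continuous $\mathcal{N}(\mathbf{0},\Sigma)$ followed by a terminal rounding, which is essentially how the statement is originally phrased in~\cite{KY16}.
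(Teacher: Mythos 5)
The paper never proves this lemma: it is imported verbatim from Katsumata--Yamada~\cite{KY16} as a black-box tool, so there is no in-paper argument to compare yours against. Your sketch is, for all practical purposes, the standard proof of the noise-rerandomization lemma from that line of work, and it is correct: multiply by $\D$, flood with a discrete Gaussian of covariance $\Sigma=(2r\sigma)^2\mathbf{I}_{\ell}-r^2\D^T\D$ (positive definite precisely because $\sigma\ge s_1(\D)$, with slack $3r^2\sigma^2\mathbf{I}_{\ell}$ that both enables the perturb-and-round sampler and provides smoothing room), and invoke the linear-transformation/convolution lemma of~\cite{MP12}. The one computation that genuinely needs doing --- showing the kernel sublattice $\{(\z,-\D^T\z):\z\in\ZZ^m\}$ is smoothed under $r\ge\omega(\sqrt{\log m})$ via the bound $\D\Sigma^{-1}\D^T\preceq\frac{1}{3r^2}\mathbf{I}_m$ --- is carried out correctly. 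One caveat: drop the parenthetical ``with no residual correlation with $\z$''. Your $\z'=\D^T\z+\mathbf{f}$ \emph{is} correlated with $\z$ (in the continuous analogue the cross-covariance is $r^2\D$, a constant-order correlation), and no choice of $\mathbf{f}$ independent of $\z$ can remove it; what you prove, and all the lemma asserts, is the \emph{marginal} law of $\z'$. That is harmless for the lemma itself, but precision matters here because the security games in this paper give the adversary both $\b^T+\z^T$ and the rerandomized vector, where the joint distribution is what is actually being compared.
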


\medskip
\noindent\textbf{Lattice trapdoors} 

Our work heavily bases on the notion $\G$-trapdoor introduced in~{\cite{MP12}}. In the following, we recap this notion as well as some usefull algorithms.

As in~\cite{MP12}, let $n\ge 1$, $q\ge 2$ and let $\omega=n\lceil \log q \rceil=nk$, we will use the vector $\bf{g}^T=(1,2,4,\dots,2^{k-1})$ and extend it to get the gadget matrix $\G=\bf{I}_n\otimes \bf{g}^T\in\ZZ^{n\times \omega}_q$ such that the lattice $\Lp_q(\G)$ has a public known matrix $\T_{\G}\in\ZZ^{\omega\times \omega}$ with $\Vert\widetilde{\T_{\G}}\Vert\le \sqrt{5}$ and $\|\T_{\G}\|\le \max(\sqrt{5},\sqrt{k})$.

\begin{defn}\label{Gtrapdoor}($\G$-trapdoor)
Let $n\ge 1$, $q\ge 2$ and $\omega=n\lceil \log q \rceil$, $m\ge \omega$. Let $\A\in\ZZ^{n\times m}_q$, $\G\in\ZZ^{n\times \omega}_q$. Let $\bf{H}\in\ZZ^{n\times n}_q$ be some invertible matrix. A matrix $\R\in\ZZ^{(m-\omega)\times \omega}$ is called a $\G$-trapdoor for $\A$ with tag $\bf{H}$ if it holds that $\A\begin{bmatrix} -\R \\ \bf{I}_{\omega} \end{bmatrix}=\bf{H}\G \mod q$. The quality of the trapdoor is measured by its largest singular value $s_1(\R)$.
\end{defn}

~{\cite{MP12}} also presented an algorithm to generate a pseudorandom matrix $\F\in\ZZ^{n\times (m+\omega)}_q$ together with a ``strong" $\G$-trapdoor for the lattice $\Lp_q(\F)$:
\begin{enumerate}
    \item Sample $\A\gets\ZZ^{n\times m}_q$, $\R\gets\cal{D}^{m\times \omega}_{\ZZ,\omega(\sqrt{\log n})}$ and an invertible matrix $\bf{H}\gets\ZZ^{n\times n}_q$
    \item Return $\F=[\A|\A\R+\bf{H}\G]$ and the $\G$-trapdoor $\R$.
\end{enumerate}
The matrix $\R\gets\cal{D}^{m\times \omega}_{\ZZ,\omega(\sqrt{\log n})}$ can do everything that a low-norm basis of $\Lp_q(\F)$ does. Moreover, $\R$ can be used to efficiently generate low-norm basis $\T_{\F}\in \ZZ^{(m+\omega)\times(m+\omega)}$ for $\Lp_q(\F)$. \\

Next, we recall the following lemma  from {\cite{GVP08}}:

\begin{lem}\label{SamplePre}
Let $q, k,n, m$  be integers with $q> 2, k>1$, $m>n$ and let $\A \in \ZZ_q^{n\times m}$, $\U\in\ZZ^{n\times k}_q$. Let $\T_{\A}$ be a basis for $\Lp_q(\A)$. For $\sigma\ge\|\widetilde{\T_\A}\|\cdot\omega(\sqrt{\log m})$, there is a PPT algorithm $\SamplePre(\A,\T_{\A},\U,\sigma)$ that returns a matrix $\D\in\ZZ^{n\times k}_q$ sampled from a distribution statistically close to $\cal{D}_{\L_q^{\bf{U}}(\A),\sigma}$, whenever $\L_q^{\bf{U}}(\A)$ is not empty such that $\A\D=\U$.
\end{lem}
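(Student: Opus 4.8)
The plan is to reduce the matrix statement to the single-vector case and then invoke the randomized nearest-plane (Klein/GPV) discrete Gaussian sampler underlying~\cite{GVP08}. First I would note that it suffices to treat $k=1$: writing $\U=[\u_1|\cdots|\u_k]$, the algorithm samples the columns of $\D$ independently, so it is enough to produce, for a fixed $\u\in\ZZ^n_q$ with $\L_q^{\u}(\A)\neq\emptyset$, a vector $\d\in\ZZ^m$ whose distribution is statistically close to $\DuA$ and which satisfies $\A\d=\u\bmod q$. The matrix case then follows because a product of statistically close distributions is statistically close to the product of the limits, and the running time grows only by the factor $k$.

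Next I would exploit the coset structure. Since $\L_q^{\u}(\A)$ is nonempty, Gaussian elimination over $\ZZ_q$ yields in polynomial time some $\bf{w}\in\ZZ^m$ with $\A\bf{w}=\u\bmod q$, so that $\L_q^{\u}(\A)=\bf{w}+\Lp_q(\A)$. Sampling from $\DuA$ is then equivalent to sampling $\v$ from the shifted discrete Gaussian $\cal{D}_{\Lp_q(\A),\sigma,-\bf{w}}$ over the fixed lattice $\Lp_q(\A)$ and returning $\d=\bf{w}+\v$; by construction $\A\d=\A\bf{w}=\u\bmod q$ and $\d$ lies in the correct coset. This isolates the only nontrivial task: sampling a discrete Gaussian over $\Lp_q(\A)$ about an arbitrary center, given the basis $\T_{\A}$.

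For that task I would run the randomized variant of Babai's nearest-plane algorithm on the Gram--Schmidt vectors $\widetilde{\T_{\A}}$, performing one independent one-dimensional discrete Gaussian rounding along each of the $m$ Gram--Schmidt directions. The key quantitative input is the smoothing-parameter bound $\eta_\epsilon(\Lp_q(\A))\le\|\widetilde{\T_{\A}}\|\cdot\omega(\sqrt{\log m})$ for negligible $\epsilon$: under the hypothesis $\sigma\ge\|\widetilde{\T_{\A}}\|\cdot\omega(\sqrt{\log m})$ this gives $\sigma\ge\eta_\epsilon(\Lp_q(\A))$, which is exactly the regime in which this sampler is provably within statistical distance $\negl(n)$ of the true $\cal{D}_{\Lp_q(\A),\sigma,-\bf{w}}$. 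Composing with the reduction above yields the claimed algorithm; Lemma~\ref{lem:Normofsample} additionally certifies that the output has the expected norm.

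The main obstacle is precisely the statistical-closeness analysis of the randomized nearest-plane procedure: one must show that iterating independent one-dimensional discrete Gaussian rounds along $\widetilde{\T_{\A}}$ reproduces the genuine $m$-dimensional discrete Gaussian up to negligible error. This is done by induction on the coordinates, controlling at each step the total Gaussian mass $\rho_\sigma$ of the residual coset of the sublattice spanned by the remaining basis vectors; the fact that $\rho_\sigma$ of any coset of a lattice is essentially $\rho_\sigma$ of the lattice once $\sigma\ge\eta_\epsilon$ is what keeps the accumulated error within $\negl(n)$. Everything else --- polynomial running time, the equality $\A\D=\U$, and the handling of the non-emptiness hypothesis --- is immediate from the construction.
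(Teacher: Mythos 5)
Your sketch is the standard Gentry--Peikert--Vaikuntanathan argument, and it is correct: the paper itself offers no proof of this lemma, simply citing it from~\cite{GVP08}, and your reduction (columns independently, then a coset representative $\bf{w}$ from Gaussian elimination, then randomized nearest-plane over $\Lp_q(\A)$ with center $-\bf{w}$ under the condition $\sigma\ge\|\widetilde{\T_\A}\|\cdot\omega(\sqrt{\log m})\ge\eta_\epsilon(\Lp_q(\A))$) is exactly the proof in that reference. The only point worth noting is that the output lives in $\ZZ^{m\times k}$ rather than $\ZZ^{n\times k}_q$ as the lemma statement (mis)types it; your $\d\in\ZZ^m$ is the correct dimension.
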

The following lemma consists of algorithms for generating bases for lattices collected from the sampling technique in the work of Agrawal et al.~{\cite{ABB10-EuroCrypt}} and the $\SamplePre$ algorithm from the work of Micciancio et al.~{\cite[Theorem 5.1]{MP12}} which will be used in our construction. Note that the $\SamplePre$ algorithm in~\cite{MP12} is different from the $\SamplePre$ algorithm from~\cite{ABB10-EuroCrypt} in Lemma~\ref{SamplePre} above.
\begin{lem}\label{thm:Gauss}
Let $n\ge 1$, $q\ge 2$, $\omega=n\lceil \log q \rceil$, $m\ge \omega$. Let $\A\gets \ZZ^{n\times m}_q$.
    \begin{itemize} 
        \item  Let $\T_{\A}$ be a basis for $\Lp_q(\A)$, $\bf{M}\gets\ZZ_q^{n\times m_1}$ and $\sigma\ge\|\widetilde{\T_{\A}}\|\cdot\omega(\sqrt{\log(m+m_1)})$. Then there exists a PPT algorithm $\SampleBasisLeft(\A,\bf{M},\T_{\A},\sigma)$ that outputs a basis of $\Lambda_q^\perp([\A|\bf{M}])$.
        \item Let $\R\gets\cal{D}^{m\times \omega}_{\ZZ,\omega(\sqrt{\log n})}$, $\U\gets \ZZ^{n\times \omega}_q$, and let  $\bf{H}\gets \ZZ^{n\times n}_q$ be an invertible matrix. Let $\F=[\A|\A\R+\bf{H}\G]$. Then for $\sigma\ge 5 \cdot  s_1(\R) \cdot\omega(\sqrt{\log n})$, there exists a PPT algorithm $\SampleRight(\R,\F,\bf{H},\U,\sigma)$ that outputs a matrix $\bf{D}\in\ZZ^{(m+\omega)\times \omega}$ distributed statistically close to $\cal{D}_{\L_q^{\bf{U}}(\F),\sigma}$ s.t.  $\F\D=\U$.\\
        In particular, there exits a PPT algorithm $\SampleBasisRight(\R,\F,\bf{H},\U,\sigma)$ that outputs a basis $\T\in\ZZ^{(m+\omega)\times (m+\omega)}$ of $\Lambda_q^\perp(\F)$ which distributes statistically close to $\cal{D}_{\L_q^{\perp}(\F),\sigma}$, i.e., $\F\T=\bf{0}$.
    \end{itemize}
\end{lem}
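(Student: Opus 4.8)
\medskip
\noindent\textbf{Proof sketch.}
Both items are assembled from primitives already available in the literature, so the plan is to recall each construction and check that the quantitative hypotheses — the smoothing factors $\omega(\sqrt{\log\cdot})$, the bound on $\|\widetilde{\T_{\A}}\|$ required by $\SamplePre$ (Lemma~\ref{SamplePre}), and the spectral bound on the $\G$-trapdoor — all line up. Note first that, since $\A$ has full rank $n$ over $\ZZ_q$, so do $[\A|\bf{M}]$ and $\F$, hence $\L_q^{\perp}([\A|\bf{M}])$ and $\L_q^{\perp}(\F)$ are full-rank lattices in $\ZZ^{m+m_1}$ and $\ZZ^{m+\omega}$ respectively.

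For $\SampleBasisLeft$ I would first invoke the basis-extension primitive $\ExtendBasis$ of~\cite{CHKP10} (also used in~\cite{ABB10-EuroCrypt}): from the basis $\T_{\A}$ of $\L_q^{\perp}(\A)$ one obtains in polynomial time a basis $\T'$ of $\L_q^{\perp}([\A|\bf{M}])$ with $\|\widetilde{\T'}\| = \|\widetilde{\T_{\A}}\|$, essentially by placing $\T_{\A}$ and an identity block on the two coordinate blocks and correcting the off-diagonal part. Then, with $\sigma \ge \|\widetilde{\T_{\A}}\|\cdot\omega(\sqrt{\log(m+m_1)})$, I would apply the basis-randomization routine of~\cite{CHKP10}: it calls $\SamplePre([\A|\bf{M}],\T',\bf{0},\sigma)$ repeatedly to draw $m+m_1$ independent samples from $\cal{D}_{\L_q^{\perp}([\A|\bf{M}]),\sigma}$, which are linearly independent with overwhelming probability and are turned into a basis $\T$ with $\|\widetilde{\T}\|\le\sigma\sqrt{m+m_1}$. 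Composing the two steps gives $\SampleBasisLeft$.

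For the second item, $\SampleRight$ is the perturbation-based Gaussian preimage sampler of~\cite[Theorem~5.1]{MP12} instantiated with $\F=[\A|\A\R+\bf{H}\G]$, $\G$-trapdoor $\R$, and invertible tag $\bf{H}$: to output $\D$ with $\F\D=\U$ one samples a covariance-shaped perturbation $\bf{p}$, applies the efficient $\G$-sampler on $\L_q^{\perp}(\G)$ — which is polynomial-time because the public basis $\T_{\G}$ satisfies $\|\widetilde{\T_{\G}}\|\le\sqrt{5}$ — to get $\z$ with $\G\z=\bf{H}^{-1}(\U-\F\bf{p})$, and sets $\D=\bf{p}+\begin{bmatrix}-\R\\\bf{I}_{\omega}\end{bmatrix}\z$, so that $\F\D=\F\bf{p}+\bf{H}\G\z=\U$. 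The analysis of~\cite{MP12} shows $\D$ is statistically close to $\cal{D}_{\L_q^{\bf{U}}(\F),\sigma}$ once $\sigma$ is at least $\|\widetilde{\T_{\G}}\|$ times $s_1\big(\begin{bmatrix}-\R\\\bf{I}_{\omega}\end{bmatrix}\big)=\sqrt{s_1(\R)^2+1}$ times $\omega(\sqrt{\log n})$; bounding $\|\widetilde{\T_{\G}}\|\le\sqrt{5}$ and $\sqrt{s_1(\R)^2+1}\le\sqrt{2}\,s_1(\R)$ for $s_1(\R)\ge 1$ and absorbing constants yields the stated condition $\sigma\ge 5\,s_1(\R)\,\omega(\sqrt{\log n})$. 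Finally, $\SampleBasisRight$ is obtained as in the first item: call the $\SamplePre$-variant of~\cite{MP12} with target $\bf{0}$ to draw $m+\omega$ independent samples from $\cal{D}_{\L_q^{\perp}(\F),\sigma}$ and convert them to a basis $\T$ of $\L_q^{\perp}(\F)$ with $\|\widetilde{\T}\|\le\sigma\sqrt{m+\omega}$.

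The only part that is not pure bookkeeping is the ``$\mathsf{ToBasis}$'' conversion — turning a polynomial-size family of independent discrete-Gaussian lattice vectors into an actual basis without inflating the Gram--Schmidt norm — which I expect to be the main (though by now standard, cf.~\cite{GVP08,ABB10-EuroCrypt}) technical point; the remaining work is only to confirm that the $\omega(\sqrt{\log\cdot})$ terms and the requirement $\sigma\ge s_1(\D)$ needed to feed outputs back through $\SamplePre$/$\SampleRight$ later in the hierarchy are implied by the stated hypotheses.
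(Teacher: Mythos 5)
Your proposal is correct and follows essentially the same route as the paper, which does not prove this lemma in detail but assembles it from the basis-extension/$\SamplePre$ machinery of \cite{GVP08,CHKP10,ABB10-EuroCrypt} for the left-sampler and the $\G$-trapdoor preimage sampler of \cite[Theorem 5.1]{MP12} for the right-sampler, with $\SampleBasisRight$ obtained exactly as you describe by running $\SampleRight$ on target $\bf{0}$ until enough linearly independent vectors are collected (about $2(m+\omega)$ in expectation, per \cite{ABB10b}) and converting them to a basis. Your bookkeeping of the Gaussian parameters, including absorbing $\sqrt{5}\cdot\sqrt{s_1(\R)^2+1}$ into the stated constant $5$, matches the intended instantiation.
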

Here, we note that the algorithm $\SampleBasisRight$ basically runs $\SampleRight(\R,\F,\bf{H},\bf{0},\sigma)$ many times until there are enough linearly independent output vectors to form a basis of $\Lambda_q^\perp(\F)$. According to~\cite{ABB10b}, $2(m+\omega)$ samples are needed in expectation to get the basis $\T$ for $\L_q^{\perp}(\F)$.

Peikert~\cite{Pei09} shows how to construct a basis for $\Lp_q(\A_1|\A_2|\A_3)$ from a basis for $\Lp_q(\A_2)$.
\begin{thm}\label{ExtendBasis}
    For $i=1,2,3$, let $\A_i$ be a matrix in $\ZZ^{n\times m_i}$ and let $\A=(\A_1|\A_2|\A_3)$. Let $\T_2$ be a basis of $\Lp_q(\A_2)$. There is a deterministic polynomial time algorithm $\ExtendBasis$ that outputs a basis $\T$ for $\Lp_q(\A)$ such that $\Vert \widetilde{\T}\Vert = \Vert \widetilde{\T_2}\Vert$. 
\end{thm}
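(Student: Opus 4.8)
The plan is to describe explicitly the basis $\T$ that $\ExtendBasis$ returns (this is essentially Peikert's ``bonsai'' extension, applied on both sides of $\A_2$ at once) and then to verify its two required properties. Write $m=m_1+m_2+m_3$, let $\e_1,\dots,\e_{m_1}$ and $\e_1',\dots,\e_{m_3}'$ be the standard bases of $\ZZ^{m_1}$ and $\ZZ^{m_3}$, and — using that $\A_2$ has rank $n$ over $\ZZ_q$, which we may assume (it holds in all our uses, and is needed anyway for $\Lp_q(\A)$ itself to have covolume $q^n$) — solve by Gaussian elimination modulo $q$ for integer matrices $\mathbf{W}_1\in\ZZ^{m_2\times m_1}$, $\mathbf{W}_3\in\ZZ^{m_2\times m_3}$ with $\A_2\mathbf{W}_1\equiv-\A_1$ and $\A_2\mathbf{W}_3\equiv-\A_3\pmod q$. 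Then I would have $\ExtendBasis$ output
$$\T=\begin{pmatrix}\bf{0}&\bf{I}_{m_1}&\bf{0}\\ \T_2&\mathbf{W}_1&\mathbf{W}_3\\ \bf{0}&\bf{0}&\bf{I}_{m_3}\end{pmatrix},$$
whose three row blocks have $m_1,m_2,m_3$ rows matching the blocks $\A_1,\A_2,\A_3$ of $\A$ and whose first column block is the columns of $\T_2$ padded with zero blocks above and below. Everything here is deterministic and polynomial time.

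First I would check that $\T$ is a basis of $\Lp_q(\A)$. Membership $\A\T\equiv\bf{0}\pmod q$ holds column block by column block: $\A_2\T_2\equiv\bf{0}$ since $\T_2$ is a basis of $\Lp_q(\A_2)$, while $\A_1+\A_2\mathbf{W}_1\equiv\bf{0}$ and $\A_2\mathbf{W}_3+\A_3\equiv\bf{0}$ by the choice of $\mathbf{W}_1,\mathbf{W}_3$. Clearing the $\mathbf{W}_i$ by integer row operations and then swapping two row blocks turns $\T$ into $\mathrm{diag}(\T_2,\bf{I}_{m_1},\bf{I}_{m_3})$, so $|\det\T|=|\det\T_2|$, which is the covolume of $\Lp_q(\A_2)$, namely $q^n$; by the rank assumption this also equals the covolume of $\Lp_q(\A)$. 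Since the $m$ columns of $\T$ are linearly independent vectors of $\Lp_q(\A)$, they generate a full-rank sublattice of $\Lp_q(\A)$ of the same covolume, hence all of it, so $\T$ is a basis.

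The substantive point is the Gram--Schmidt equality, which I would get by orthogonalizing the columns of $\T$ in the printed left-to-right order. The first $m_2$ columns lie in and span $\{\bf{0}\}\times\RR^{m_2}\times\{\bf{0}\}$ (this is exactly where full-rankness of $\Lp_q(\A_2)$ is used), and since $\mathbf{t}\mapsto(\bf{0},\mathbf{t},\bf{0})$ is an isometry their Gram--Schmidt vectors are the zero-padded Gram--Schmidt vectors of $\T_2$, of norm at most $\Vert\widetilde{\T_2}\Vert$. On reaching a middle-block column $(\e_j,\mathbf{w}_j,\bf{0})$, projecting it orthogonally to the span of all earlier columns annihilates $\mathbf{w}_j$ (which lies in $\{\bf{0}\}\times\RR^{m_2}\times\{\bf{0}\}$) and leaves $(\e_j,\bf{0},\bf{0})$, already orthogonal to the earlier $(\e_{j'},\bf{0},\bf{0})$, so its Gram--Schmidt vector is $(\e_j,\bf{0},\bf{0})$, of norm $1$; identically, each last-block column contributes $(\bf{0},\bf{0},\e_k')$, of norm $1$. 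Hence $\Vert\widetilde{\T}\Vert=\max\bigl(\Vert\widetilde{\T_2}\Vert,1\bigr)=\Vert\widetilde{\T_2}\Vert$, the last equality because the product of the Gram--Schmidt norms of the columns of $\T_2$ equals $|\det\T_2|=q^n\ge 2$, which forces at least one of them — hence $\Vert\widetilde{\T_2}\Vert$ — to exceed $1$. I expect the only delicate part to be the bookkeeping in this last paragraph, namely tracking which coordinate block has been ``used up'' by the time each column is orthogonalized; it all rests on the embedded $\T_2$ already spanning the whole middle coordinate block.
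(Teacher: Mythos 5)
Your proposal is correct and is exactly the standard ExtBasis/bonsai argument from the cited sources ([Pei09], [CHKP10, Lemma 3.2]); the paper itself states Theorem~\ref{ExtendBasis} as an imported result and gives no proof, so there is nothing to diverge from. You also rightly flag the one hypothesis the paper's statement leaves implicit — that the columns of $\A_2$ generate $\ZZ_q^n$ — which is needed both to solve for $\mathbf{W}_1,\mathbf{W}_3$ and to match covolumes, and which holds in all of the paper's invocations.
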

We will also use the following lemma in the decryption algorithm to recover the message.

\begin{lem}\label{lem:Invert}
Let $\A$ be a uniformly random matrix in $\ZZ^{n\times m}_q$ where $m>2n$. Let $\T\in \ZZ^{m\times m}$ be a basis of $\Lp_q(\A)$. Given $\y = \s^T\A+\e^T$ where $\s\in\ZZ^n_q$, $\e\in\ZZ^m$ with $\Vert\e^T\T\Vert_{\infty}<q/4$, there exists an algorithm $\Invert(\A,\T,\y)$ that outputs $\s$ and $\e$ with overwhelming probability. 
\end{lem}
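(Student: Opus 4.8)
The plan is to exhibit the algorithm $\Invert$ explicitly in three deterministic steps and then argue that it succeeds whenever the uniform matrix $\A$ has full row rank $n$ over $\ZZ_q$, an event of overwhelming probability once $m>2n$ and $q$ is prime (as it is throughout the paper).

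First I would use the defining property of the trapdoor: since $\T$ is a basis of $\Lp_q(\A)$, every column of $\T$ lies in $\Lp_q(\A)$, so $\A\T=\bf{0}\bmod q$. Multiplying $\y$ on the right by $\T$ therefore annihilates the secret part, $\y^T\T=\s^T\A\T+\e^T\T=\e^T\T\bmod q$. Because $\Vert\e^T\T\Vert_\infty<q/4<q/2$, each coordinate of the integer vector $\e^T\T$ is the unique representative of its residue modulo $q$ lying in $(-q/2,q/2]$; hence $\Invert$ recovers the exact integer vector $\bf{w}^T:=\e^T\T$ by reducing $\y^T\T\bmod q$ coordinatewise into that interval. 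Next, $\Lp_q(\A)$ is a full-rank lattice in $\ZZ^m$ (it contains $q\ZZ^m$), so $\T$ is nonsingular over $\QQ$ and $\Invert$ can solve the linear system $\x^T\T=\bf{w}^T$ over $\QQ$; the solution is unique, and since $\e^T$ is a solution, it equals $\e^T$ (in particular it comes out integral). Finally $\Invert$ forms $\z^T:=(\y^T-\e^T)\bmod q$, which equals $\s^T\A\bmod q$, and solves the linear system $\A^T\s'=\z\bmod q$ over $\ZZ_q$; when $\A$ has rank $n$ the map $\s\mapsto\s^T\A$ is injective, so this returns the unique $\s'=\s$. The pair $(\s,\e)$ is then output.

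The only probabilistic ingredient — and the one I expect to be the main, though minor, point to state carefully — is the claim that a uniform $\A\in\ZZ_q^{n\times m}$ with $m>2n$ has full row rank except with negligible probability. For prime $q$ this is the standard count: the probability that a uniform $n\times m$ matrix over $\ZZ_q$ fails to have rank $n$ is at most $\sum_{i=0}^{n-1}q^{\,i-m}\le 2q^{\,n-1-m}$, which for $m>2n$ is bounded by $2q^{-n}$ and hence negligible in $n$. Conditioned on this full-rank event, all three steps above are deterministic and exact, so $\Invert$ outputs the correct $(\s,\e)$ with overwhelming probability, as required. I would also remark that the hypothesis $\Vert\e^T\T\Vert_\infty<q/4$ is used only through the weaker bound $<q/2$ needed for the coordinatewise lifting in the first step.
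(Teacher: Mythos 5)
Your proof is correct and follows essentially the same route as the paper's: multiply by $\T$ to kill the $\s^T\A$ term, lift $\e^T\T$ coordinatewise using the $\ell_\infty$ bound, invert $\T$ over $\QQ$ to get $\e$, and then solve for $\s$ using that $\A$ has full row rank. The only difference is that you make explicit the overwhelming-probability full-rank argument for the uniform $\A$, which the paper merely asserts; this is a welcome but minor addition.
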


It can be easily seen that the lemma is true since the algorithm works by computing $\bf{y}^T\T \mod q = \e^T \T \mod q$. We have $\Vert\e^T\T\Vert_{\infty}<q/4$, so $\e^T \T \mod q = \e^\T \in \ZZ^{m}$. Since $\T\in \ZZ^{m\times m }$ is a basis of lattice $\Lp_q(\A)$, $\T$ has linearly independent columns, one can simply use the Gaussian elimination to recover $\e$ and then get $\s^T\A$. Finally, $\s$ can be recovered by Gaussian elimination because $\A\in \ZZ^{n\times m}$ has at least $n$ linearly independent column vectors.

\section{AHIBET Construction over Integer Lattices}

\begin{itemize}
    \item Let $\lambda$ be the security parameter, $d$ be the hierarchy depth and identities are vector $\id=(\id_1,\dots,\id_{\ell})$ ($1\le \ell\le d$) where all components $\id_i$ are in $\ZZ^n\setminus{\{\bf{0}\}}$.
    \item Let $\FRD:\ZZ^n_q \longrightarrow \ZZ^{n \times n}_q$ be a full-rank difference encoding (FRD) from~\cite{ABB10-EuroCrypt} s.t. for all distinct $\u,\v \in \ZZ^n_q$, $\FRD(\u)-\FRD(\v)\in \ZZ^{n\times n}_q$ is an invertible matrix.
    \item Let $\mathsf{H}:(\ZZ^n_q)^* \longrightarrow \ZZ^n_q$ be a collision resistant hash function.
    \item For an integer $q>2$, $x\in \ZZ_q$, the algorithm $\Round(x)$ returns $0$ if $x$ is closer to $0$ than to $\left\lfloor\displaystyle{\frac{q}{2}}\right\rfloor$ modulo $q$; otherwise, it returns $1$.
\end{itemize}
In the construction of the AHIBET scheme, we assume each identity $\id$ can only be given exactly one tracing key $\Tsk_{\id}$.

\begin{description}
	\item[Setup($\lambda,d$)]$\\$ On input security parameter $\lambda$ and a maximum hierarchy depth $d$, set the parameters $(n,m,q,\omega,\bar{\sigma},\tau,\alpha,r)$ as in section \ref{sec:params}, the algorithm does:
	\begin{enumerate}
		\item Sample uniformly random matrices $\A \gets\ZZ_q^{n\times m}$,  $ \A_2,\dots,\A_d\gets\ZZ_q^{n\times \omega}$ and $\R_0,\R_1\gets\cal{D}^{m\times \omega}_{\ZZ,\omega(\sqrt{\log n})}$.
		\item Set $\A_0 \gets \A\R_0\in\ZZ^{n\times \omega}_q$, $\A_1 \gets \A\R_1\in\ZZ^{n\times \omega}_q$ and choose $\U_1,\U_2\gets \ZZ^{n\times \lambda}_q$ uniformly at random.
		\item Output the master public key and the master secret key
		$$\MPK=(\A,\A_0,\A_1,\dots,\A_d,\U_1,\U_2)~,~\MSK=(\R_0,\R_1).$$
	\end{enumerate} 
	\item [Extract($\MPK,\MSK,\id$)]$\\$ On input the master pubic key $\MPK$, the master secret key $\MSK$ and an identity $\id$ of level $1$, the algorithm generates secret key for $\id$ as follows:
	\begin{enumerate}
	    \item Compute $\F_{\id}=[\A|\A_1+\FRD(\id)\G] \in \ZZ^{n\times (m+\omega)}_q$.
	    \item Sample $\T_{\id}\gets \SampleBasisRight(\F_{\id},\R_1,\FRD(\id),\sigma_1) \in \ZZ^{(m+\omega)\times (m+\omega)}_q$ s.t. $\F_{\id}\T_{\id}=0$.
	    \item Output $\SK_{\id}\gets \T_{\id}$.\\
	\end{enumerate}
	
	\item [Derive($\MPK,\SK_{\id},(\id|\id_{\ell})$)]$\\$ On input the master pubic key $\MPK$, a secret key $\SK_{\id}$ corresponding to an identity $\id=(\id_1,\dots,\id_{\ell-1})$ at depth $\ell-1$ and an identity $\id|\id_{\ell}=(\id_1,\dots,\id_{\ell-1},\id_{\ell})$ of level $\ell>1$, the algorithm generates secret key for $\id$ as follows:
	\begin{enumerate}
	    \item Set $\F_{\id|\id_{\ell}}=[\F_{\id}|\A_{\ell}+\FRD(\id_{\ell})\G] \in \ZZ^{n\times (m+\ell\omega)}_q$ with $\F_{\id}=[\A|\A_1+\FRD(\id_1)\G|\dots|\A_{\ell-1}+\FRD(\id_{\ell-1})\G]$.
	    \item Sample $\T_{\id|\id_{\ell}}\gets \SampleBasisLeft(\F_{\id},\A_{\ell}+\FRD(\id_{\ell})\G,\SK_{\id},\sigma_{\ell})$ s.t. $\F_{\id|\id_{\ell}}\T_{\id|\id_{\ell}}=0$.
	    \item Output $\SK_{\id|\id_{\ell}}\gets \T_{\id|\id_{\ell}}$.\\
	\end{enumerate}
	\item [TskGen($\MPK,\MSK,\id$)]$\\$ On input the master pubic key $\MPK$, the master secret key $\MSK$ and an identity $\id=(\id_1,\dots,\id_{\ell})$, the algorithm generates the tracing key for $\id$ as follows:
	\begin{enumerate}
	  
	    \item Compute $\F'_{\id}=[\A|\A_0+\FRD(\mathsf{H(\id)})\G]\in \ZZ^{n\times (m+\omega)}_q$.
	     \item Sample $\D'_{\id}\gets \SampleBasisRight(\F'_{\id},\R_0,\FRD(\mathsf{H}(\id)),\sigma_1) \in \ZZ^{(m+\omega)\times (m+\omega)}_q$ s.t. $\F'_{\id}\D'_{\id}=0$.
	    \item Sample $\D_{\id}\gets \SamplePre(\F'_{\id},\D'_{\id},\U_2 ,\sigma_{1})\in \ZZ^{(m+\omega)\times \lambda}_q$.
	    
	    \item Output $\Tsk_{\id}\gets\D_{\id}$.\\
	\end{enumerate}
	
	\item [Encrypt($\MPK,\id,\m$)]$\\$
	On input  the master pubic key $\MPK$, the algorithm encrypts the message $\m\in \{0,1\}^{\lambda}$ for identity $\id=(\id_1,\dots,\id_{\ell})$ at depth $\ell$ as follows:
	\begin{enumerate}
	    \item Compute $\F_{\id}=[\A|\B_{\id}]= [\A|\A_1+\FRD(\id_1)\G|\dots|\A_{\ell}+\FRD(\id_{\ell})\G]\in \ZZ^{n\times (m+\ell\omega)}_q$.
	    \item Sample $\k\gets \{0,1\}^{\lambda}$.
	    \item Sample a uniformly random vector $\s\gets\ZZ^n_q$.
	    \item Choose noise vectors $\e_0\gets \cal{D}^m_{\ZZ,r}$, $\e_1\gets  \cal{D}^{\ell\omega}_{\ZZ,2r\tau}$ $\e_2\gets \cal{D}^{\lambda}_{\ZZ,r},\  \e_3\gets \cal{D}^{\lambda}_{\ZZ,2r\tau}, \ \e_4\gets  \cal{D}^{\omega}_{\ZZ,2r\tau}$.
	    \item Set
	    $$ \c_0^T = \s^T \A + \e_0^T , \ \ \ \ \ \c_1^T =\s^T\B_{\id}+\e_1^T, \ \ \ \ \
	     \c_2^T = \s^T \U_1 +  \e_2^T+\m^T\left\lfloor\frac{q}{2}\right\rfloor,  $$
	    and
	    $$\c_3^T=\s^T\U_2+ \e_3^T+\k^T\left\lfloor\frac{q}{2}\right\rfloor , \ \ \ \ \  \c_4^T = \s^T (\A_0 + \FRD(\mathsf{H}(\id))\G) + \e_4^T.$$
	    \item Output $\CT=(\c_0,\c_1,\c_2,\c_3,\c_4,\k)$.\\
	\end{enumerate}
	
	\item [Decrypt($\MPK,\CT,\SK_{\id}$)]$\\$ 
		On input  the master pubic key $\MPK$, a ciphertext $\CT$ and a secret key $\SK_{\id}$ where $\id=(\id_1,\dots,\id_{\ell})$ is an identity at depth $\ell$, the algorithm does:
		\begin{enumerate}
		    \item Parse $\CT=(\c_0,\c_1,\c_2,\c_3,\c_4,\k)$; Output $\perp$ if $\CT$ doesn't parse.
		    \item Set $\F_{\id}=[\A|\A_1+\FRD(\id_1)\G|\dots|\A_{\ell}+\FRD(\id_{\ell})\G]$ and recover $\s$ via $\Invert(\SK_{\id}, \F_{\id},[\c_0^T|\c_1^T])$.
		    \item Recover $\Tilde{\k}\gets \Round(\c_3^T-\s^T\U_2)$; Return $\perp$ if $\Tilde{\k}\neq \k$.
		    \item Compute $\m\gets \Round(\c_2^T-\s^T\U_1)$.
		    \item Output $\m$.\\
		\end{enumerate}
	
	\item [TkVer($\MPK,\id,\Tsk_{\id},\CT$)]$\\$
	On input  the master pubic key $\MPK$, the algorithm uses the tracing key $\Tsk_{\id}=\D_{\id}$ corresponding to the identity $\id$ to check whether a ciphertext $\CT$ is encrypted for the given identity $\id$:
	\begin{enumerate}
	    \item Parse $\CT=(\c_0,\c_1,\c_2,\c_3,\c_4,\k)$; Output $\perp$ if $\CT$ doesn't parse.
	    \item Compute $\Tilde{\k}\gets\Round(\c_3^T-[\c_0^T|\c_4^T]\D_{\id})$.
	    \item If $\Tilde{\k}=\k$ then output $1$; else output $0$.
	\end{enumerate}
\end{description}

\subsection{Parameters}\label{sec:params}
Let $\lambda$ be the security parameter, $d$ is the maximum hierarchical depth of the scheme, $1\le \ell\le d$. We assume that all parameters are functions of $\lambda$. Now for the system to work correctly, we need to ensure:
	\begin{itemize}
		\item $\sigma_{\ell}$ is large enough for $\SampleBasisLeft$ and $\SampleBasisRight$, i.e.,
		$\sigma_{\ell}>O(m)\cdot\omega(\log n)\cdot\omega(\sqrt{\log (\ell+1)m})$,
		\item $\tau$ is large enough for $\ReRand$, i.e. 
		$\tau\ge O(m^{3/2})\cdot\omega(\log^{3/2} n),$
		\item the error term in decryption is less than ${q}/{4}$ with high probability, i.e. $\alpha<(8\sigma_{\ell}\tau(m+\ell\omega))^{-1}$,
	\end{itemize}
Hence the following choice of parameters $(n,m,q,\omega,\bar{\sigma},\tau,\alpha,r)$ satisfies all of the above conditions, taking $\lambda$ to be the security parameter:
	\begin{equation}\label{eq:params}
	\begin{aligned}
	& n\ge 2 \quad,\quad q\ge 2 \quad,\quad \omega=n\lceil\log q\rceil \quad,\quad 1\le \ell \le d\\
	& m\ge n\log q +\omega(\log n) \quad,\quad r=\alpha q \quad,\quad \sigma_1 = O(\sqrt{m})\cdot\omega(\log n)\\
	&\sigma_{\ell} = O(m)\cdot\omega(\log n)\cdot\omega(\sqrt{\log (\ell+1)m}),\\
	&\tau= O(m^{3/2})\cdot\omega(\log^{3/2} n),\\
	&\alpha=[(\ell+1)\cdot O(m^{7/2})\cdot\omega(\log^{5/2}n)\cdot\omega(\sqrt{\log (\ell+1)m})]^{-1}.
	\end{aligned}
	\end{equation}
\subsection{Correctness and soundness}\label{sec:correct}
When the cryptosystem is operated as specified, during decryption of a correctly generated ciphertext encrypted a message $\m$ to an identity $\id=(\id_1,\dots,\id_{\ell})$ at depth $\ell\le d $, with the parameters as specified in~\ref{sec:params}, we have: 
\begin{itemize}
    \item Since $\e_0\gets \cal{D}^m_{\ZZ,r}$, $\e_1\gets \cal{D}^{\ell\omega}_{\ZZ,2r\tau}$, by applying Lemma~\ref{lem:Normofsample} and the parameters set up, we get $\Vert [\e_0^T|\e_1^T]\Vert\le  2r\tau\cdot\sqrt{m+\ell\omega}$, which means
    $\Vert [\e_0^T|\e_1^T]\T_{\id}\Vert_{\infty}\le \Vert [\e_0^T|\e_1^T]\Vert \cdot \Vert\T_{\id}\Vert\le (2r\tau\cdot \sqrt{(m+\ell\omega)})\cdot(\sigma_{\ell}\cdot\sqrt{m+\ell\omega})\le 2r\tau\sigma_{\ell}\cdot(m+\ell\omega)\le q/4.$\\
    Using Lemma~\ref{lem:Invert}, it is sufficient to show that the algorithm $\Invert(\T_{\id}, \F_{\id},[\c_0^T|\c_1^T])$ will output $\s$ with overwhelming probability.\\
    \item Since $\e_2\gets \cal{D}^{\lambda}_{\ZZ,r}$, $\e_3\gets \cal{D}^{\lambda}_{\ZZ,2r\tau}$, by applying Lemma~\ref{lem:Normofsample} and the parameters set up we get $\Vert \e_2\Vert \le \Vert \e_3\Vert \le 2r\tau\sqrt{\lambda}<q/4$. \\
    Hence $\Round(\c_3^T-\s^T\U_2)=\Round\left(\k\left\lfloor\displaystyle{\frac{q}{2}}\right\rfloor^T+\e_3^T\right)$ and $\Round(\c_2-\s^T\U_1)=\Round\left(\m\left\lfloor\displaystyle{\frac{q}{2}}\right\rfloor^T+\e_2^T\right)$ will correctly recover $\k$ and $\m$.
\end{itemize}
In the algorithm $\TkVer$, $\Round(\c_3^T-[\c_0^T|\c_4^T]\D_{\id})=\Round\left(\k\left\lfloor\displaystyle{\frac{q}{2}}\right\rfloor^T+\e_3^T+[\e_0^T|\e_4^T]\D_{\id}\right)$ where $\Vert\D_{\id}\Vert\le\sigma_{1}\sqrt{m+\omega}$. Hence
by the parameters set up, $\TkVer$ will correctly recover the key $\k$.

\section{Security analysis}

\subsection{Proof of Anonymity}
In this part, we will prove that our proposed AHIBET scheme is $\Anon$ secure in the standard model.
\begin{theorem}\label{thm:ANON}
    The AHIBET scheme $\Pi := (\Setup,\Extract,\Derive,\Encrypt,\Decrypt,\TskGen,\TkVer)$ with parameters $(\lambda,n,m,q,\omega,\bar{\sigma},\tau,\alpha,r)$ as in~\eqref{eq:params} is $\Anon$ secure for the maximal hierarchy depth $d$ provided that the hardness of the $\LWE_{n,m+\lambda,q,r}$ problem holds. 
\end{theorem}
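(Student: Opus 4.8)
I would run a short sequence of games whose pivot is the claim that the challenge ciphertext produced for the \emph{committed} target identity is computationally indistinguishable from a uniformly random element of the ciphertext space under $\LWE_{n,m+\lambda,q,r}$. The point to exploit is that only $\c_1$ and $\c_4$ of a ciphertext depend on the recipient identity (through $\B_{\id}$ and $\FRD(\mathsf{H}(\id))$), while $\c_0,\c_2,\c_3$ and $\k$ do not; so once the pseudorandomness claim is in place for the relevant identity, the experiment $b=0$ (encryption to $\id^*$) and the experiment $b=1$ (encryption to a random $\id'$ of the same depth $\ell$) are each indistinguishable from one and the same ``random ciphertext'' experiment, and the theorem follows by triangle inequality.

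For the $b=0$ branch the reduction simulates $\Setup$ by sampling $\R_0^*,\R_1^*,\dots,\R_d^*\gets\cal{D}^{m\times\omega}_{\ZZ,\omega(\sqrt{\log n})}$ and $\bf{W}\gets\cal{D}^{m\times\lambda}_{\ZZ,\omega(\sqrt{\log n})}$, setting $\A_i\gets\A\R_i^*-\FRD(\id_i^*)\G$ for $1\le i\le\ell$ and $\A_i\gets\A\R_i^*$ for $\ell<i\le d$, $\A_0\gets\A\R_0^*-\FRD(\mathsf{H}(\id^*))\G$, $\U_2\gets\A\bf{W}$, and taking $\U_1$ from the $\LWE$ challenge matrix $[\A\,|\,\U_1]\in\ZZ^{n\times(m+\lambda)}_q$; by Lemma~\ref{lem:ClosetoUniform} this $\MPK$ is statistically close to a genuine one. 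A query $\cal{O}_{\Derive}(\id)$ with $\id$ not a prefix of $\id^*$ is answered by letting $j$ be the first position where $\id$ and $\id^*$ differ (or $j=\ell+1$ when $\id$ strictly extends $\id^*$), regrouping $\F_{\id^{(j)}}$ as $[\bf{A}'\,|\,\bf{A}'\R'+\bf{H}\G]$ with $\bf{A}'$ the first $j{-}1$ blocks, $\R'$ the matrix $\R_j^*$ padded by zeros (so $s_1(\R')=s_1(\R_j^*)$ stays small) and $\bf{H}=\FRD(\id_j)-\FRD(\id_j^*)$ (resp.\ $\FRD(\id_j)$) invertible, then running $\SampleBasisRight$ to obtain a level-$j$ key and $\SampleBasisLeft$ to descend to the requested depth, all Gaussian widths being admissible for \eqref{eq:params}. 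A query $\cal{O}_{\TskGen}(\id)$ with $\id\ne\id^*$ is answered by using $\R_0^*$ as a $\G$-trapdoor for $\F'_{\id}=[\A\,|\,\A\R_0^*+(\FRD(\mathsf{H}(\id))-\FRD(\mathsf{H}(\id^*)))\G]$ (the tag is invertible after a game hop that excludes $\mathsf{H}(\id)=\mathsf{H}(\id^*)$ at the cost of the collision-resistance advantage of $\mathsf{H}$), followed by $\SamplePre$ with target $\U_2$; by Lemmas~\ref{SamplePre} and~\ref{thm:Gauss} every answer is (statistically close to) the canonical discrete Gaussian over the relevant $\Lambda_q^{\perp}$.

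On receiving the $\LWE$ response $(\v_0^T,\v_1^T)$ the reduction sets $\c_0^T:=\v_0^T$, $\c_2^T:=\v_1^T+\m^T\lfloor q/2\rfloor$, and forms $(\c_1^T,\c_4^T,\c_3^T)$ from a single call $\ReRand([\R_1^*|\cdots|\R_\ell^*\,|\,\R_0^*\,|\,\bf{W}],\v_0^T,r,\tau)$, adding $\k^T\lfloor q/2\rfloor$ to the $\bf{W}$-block and using the challenger's own $\k$. If the oracle is pseudorandom, $\v_0^T=\s^T\A+\e_0^T$, and since the programming makes $\B_{\id^*}=\A[\R_1^*|\cdots|\R_\ell^*]$, $\A_0+\FRD(\mathsf{H}(\id^*))\G=\A\R_0^*$ and $\U_2=\A\bf{W}$ hold \emph{exactly}, Lemma~\ref{lem:ReRand} shows the simulated ciphertext is statistically close to $\Encrypt(\MPK,\id^*,\m)$ (one checks $s_1([\R_1^*|\cdots|\R_\ell^*|\R_0^*|\bf{W}])=O(\sqrt{dm})\cdot\omega(\sqrt{\log n})\le\tau$). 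If the oracle is truly random, $\v_0$ is uniform and independent of $\A$, so applying Lemma~\ref{lem:ClosetoUniform} to $(\A^{T}\,|\,\v_0)^{T}$ times $[\R_1^*|\cdots|\R_\ell^*\,|\,\R_0^*\,|\,\bf{W}]$ (valid since $m\gtrsim(n{+}1)\lceil\log q\rceil$ for \eqref{eq:params}) makes $(\c_1,\c_4,\c_3)$ statistically close to uniform and independent of $\id^*$, and with the already-uniform $\c_0,\c_2$ the whole ciphertext is statistically close to uniform; hence $\cal A$'s $b=0$ view is within $\Adv^{\LWE}_{n,m+\lambda,q,r}+\negl$ of its view in the random-ciphertext game. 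The $b=1$ branch is identical except the reduction samples the random $\id'$ at the outset (legitimate, as $\id'$ is purely internal challenger randomness), programs $\MPK$ around $\id'$ instead of $\id^*$, and aborts if $\cal A$ ever queries $\cal{O}_{\Derive}$ or $\cal{O}_{\TskGen}$ on $\id'$ or a prefix of it; since $\id'$ is information-theoretically hidden and each component ranges over an exponentially large set, this abort has negligible probability, and conditioned on no abort the $b=1$ view is likewise within $\Adv^{\LWE}_{n,m+\lambda,q,r}+\negl$ of the \emph{same} random-ciphertext game. Chaining the hops gives $\Adv^{\Anon}_{\cal A,\mathrm{AHIBET}}\le 2\,\Adv^{\LWE}_{n,m+\lambda,q,r}+\negl$.

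\textbf{Main obstacle.} The architecture is standard; the real work is the statistical bookkeeping. One must check that the programmed $\MPK$, all oracle answers, and the challenge ciphertext are jointly within negligible statistical distance of the real game, and, crucially for the leftover-hash step, that conditioning on $\MPK$ and on every oracle response still leaves $[\R_1^*|\cdots|\R_\ell^*|\R_0^*|\bf{W}]$ with enough residual min-entropy; this rests on the outputs of $\SampleBasisRight$, $\SampleBasisLeft$ and $\SamplePre$ being distributed independently of the trapdoor (or parent key) actually used, given the public matrices. A related subtlety is that a deep-level key produced in the simulation by $\SampleBasisRight$ at the divergence level followed by $\SampleBasisLeft$ must have the same distribution as one produced in the real scheme by iterated $\SampleBasisLeft$ from a level-$1$ key, which again follows from these algorithms yielding the canonical discrete Gaussian basis, and that the $b=0$ and $b=1$ random-ciphertext games genuinely coincide despite being programmed around different identities.
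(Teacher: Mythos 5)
Your plan follows essentially the same route as the paper's proof: program $\A_0,\A_1,\dots,\A_d$ and $\U_2$ around the committed identity using $\G$-trapdoors, answer $\cal{O}_{\Derive}$ by invoking $\SampleBasisRight$ at the first level of divergence and extending downwards, answer $\cal{O}_{\TskGen}$ via $\R_0$ and $\SamplePre$ with the collision resistance of $\mathsf{H}$ guaranteeing an invertible tag, embed the $\LWE$ challenge $[\A\,|\,\U_1]$ into $(\c_0,\c_2)$, and manufacture $(\c_1,\c_3,\c_4)$ with $\ReRand$. The one genuine structural difference is the endgame. The paper runs a single reduction and argues that the truly-random $\LWE$ case already yields (via the leftover hash lemma) a ciphertext distributed ``as an encryption to a random identity,'' i.e.\ it identifies the uniform-ciphertext distribution with the $b=1$ distribution in one step, obtaining $\Adv^{\Anon}\le\Adv^{\LWE}$; you instead route both branches through a common uniform-ciphertext hybrid, applying $\LWE$ once with the parameters programmed around $\id^*$ and once around the internally sampled $\id'$ (with the negligible-probability abort on queries hitting $\id'$), paying a factor of $2$. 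Your two-sided version is the logically tighter completion: an honest $b=1$ ciphertext $\c_1^T=\s^T\B_{\id'}+\e_1^T$ is not statistically uniform given $\c_0$, so closing the gap between ``uniform'' and ``encryption to a random $\id'$'' really does cost a second $\LWE$ invocation, which the paper's single-hop phrasing glosses over. The remaining deviations (using $\SampleBasisLeft$ to descend instead of $\ExtendBasis$, the explicit collision-resistance game hop, the zero-padded $\R'$ regrouping) are cosmetic and consistent with Lemma~\ref{thm:Gauss} and the parameter choices in~\eqref{eq:params}; the bookkeeping concerns you flag at the end (joint statistical closeness and residual min-entropy of $[\R_1|\cdots|\R_\ell|\R_0|\overline{\R}]$ after conditioning on oracle answers) are exactly the points the paper also leaves implicit.
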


\begin{proof}
We will proceed the proof via a sequence of games where the \textbf{Game 0} is identical to the original $\Anon$ game and the adversary in the last game has advantage at most the advantage of an efficient LWE adversary.\\

Let $\mathcal{A}$ be a PPT adversary that attacks the AHIBET scheme $\Pi$ and has advantage $\Adv_{\cal{A},\Pi}^{\Anon}=\epsilon$. We will then construct a simulator $\cal{B}$ that solves the LWE problem using $\cal{A}$.\\
Let $G_i$ denote the event that the adversary $\cal{A}$ wins \textbf{Game $i$}. The adversary's advantage in \textbf{Game $i$} is $\left|\Pr[G_i]-\displaystyle{\frac{1}{2}}\right|$.

\begin{description}
	\item[\textbf{Game 0}.]~~ This is the original $\Anon$ game between the adversary $\cal{A}$ against our scheme and an $\Anon$ challenger. \\
	$$\Adv_{\cal{A},\Pi}^{\Anon}= \left|\Pr[G_0] -\frac{1}{2}\right|=\left|\Pr[b'=b]-\frac{1}{2}\right|.$$
	\item[\textbf{Game 1}.]~~ \textbf{Game 1} is analogous to \textbf{Game 0} except that we slightly modify the way that the challenger $\cal{C}$ generates the master public key $\MPK$ and responds to the key derivation oracles $\cal{O}_{\Derive}$ as well as the tracing key oracles $\cal{O}_{\TskGen}$. Let $\id^{\ast}=(\id^{\ast}_1,\dots,\id^{\ast}_{\ell})$ $(\ell\le d)$ be the target identity that $\cal{A}$ intends to attack. After receiving $\id^{\ast}$, $\cal{C}$ does:
	\begin{enumerate}
		\item Sample $\A\gets\ZZ_q^{n\times m}$, $\R_{0},\R_1,\dots,\R_d \gets\cal{D}^{m\times \omega}_{\ZZ,\omega(\sqrt{\log n})}$ and $\overline{\R}\gets \cal{D}^{m\times \lambda}_{\ZZ,\omega(\sqrt{\log n})}$.
		\item Set $\A_0 \gets \A\R_0- \FRD(\mathsf{H}(\id^{\ast}))\G$.
		\item Set $\A_i \gets \A\R_i- \FRD(\id^{\ast}_i)\G$ for $i= 1,\dots,\ell$ and $\A_i \gets \A\R_i$ for $\ell<i\le d$.
		\item Set $\U_2 \gets \A\overline{\R}$ and sample $\U_1\gets \ZZ^{n\times \lambda}_q$. 
		\item Send master public key 
	$$\MPK=(\A,\A_0,\A_1,\dots,\A_d,\U_1,\U_2)$$
	 to $\mathcal{A}$ and keep $\R_0,\R_1,\dots,\R_d, \overline{\R}$ secret.
	\end{enumerate}

	\begin{itemize}
	    \item Recall that the adversary $\cal{A}$ is not allowed to use the challenge identity $\id^{\ast}$ or its prefixes for its key derivation queries.
	    To respond to the key derivation queries $\cal{O}_{\Derive}$ for $\id=(\id_1,\dots,\id_{k})\neq \id^{\ast}=(\id^{\ast}_1,\dots,\id^{\ast}_{\ell})$ ($1\le k \le d$), $\cal{C}$ sets
	    \begin{align*}
	    \F_{\id}&=[\A|\A_1+\FRD(\id_1)\G|\dots|\A_{k}+\FRD(\id_{k})\G]
	          \in \ZZ^{n\times (m+k\omega)}_q.
		\end{align*}
	    \begin{itemize}
	    \item If $k\le \ell$, then 
	        $$\F_{\id} = [\A|\A \R_1 + (\FRD(\id_1)-\FRD(\id^{\ast}_1))\G| \ldots|\A \R_k + (\FRD(\id_k) - \FRD(\id^{\ast}_k))\G ]. $$
	    Let $h$ be the sallowest level where $\id_h\neq \id_h^{\ast}$ ($h\le k$). By the property of the full-rank difference encoding $\FRD$, $\FRD(\id_h)-\FRD(\id_h^{\ast})\in \ZZ^{n\times n}_q$ is an invertible matrix, $\cal{C}$ then samples
		$$\T_{\id_h}\gets \SampleBasisRight([\A|\A\R_h+(\FRD(\id_h)-\FRD(\id_h^{\ast}))\G],\R_h,\FRD(\id_h)-\FRD(\id_h^{\ast}),\sigma_1).$$
		If $h=k=1$, $\cal{C}$ returns $\SK_{\id} = \T_{\id_1}$.\\
		If $k>1$, $\cal{C}$ 
		uses algorithm $\ExtendBasis$ to extend the basis $\T_{\id_h}$ of $\Lp_q([\A|\A\R_h+(\FRD(\id_h)-\FRD(\id_h^{\ast}))\G])$ to a basis $\T_{\id}$ of $\Lp_q(\F_{\id})$ then returns $\SK_{\id} = \T_{\id}$.
		\item If $k>\ell$, then  
		$$\F_{\id}=[\A|\A \R_1|\ldots|\A \R_{\ell}|\A \R_{\ell+1} + \FRD(\id_{\ell+1})\G|\ldots|\A \R_k + \FRD(\id_k)\G]$$
		
		and 
		    $\FRD(\id_{\ell+1})\in \ZZ^{n\times n}_q$ is an invertible matrix. The challenger  $\cal{C}$ samples
		$$\T_{\id_{\ell+1}}\gets \SampleBasisRight([\A|\A\R_{
		\ell+1}+\FRD(\id_{\ell+1})\G],\R_{\ell+1},\FRD(\id_{\ell+1}),\sigma_1)$$
		and uses algorithm $\ExtendBasis$ to extend the basis $\T_{\id_{\ell+1}}$ of $\Lp_q([\A|\A\R_{\ell+1}+\FRD(\id_{\ell+1})\G])$ to a basis $\T_{\id}$ of $\Lp_q(\F_{\id})$. Finally, $\cal{C}$ returns $\SK_{\id} = \T_{\id}$.
	    \end{itemize}

	    \item 	To respond to the tracing key query $\cal{O}_{\TskGen}$ for $\id=(\id_1,\dots,\id_{k})\neq \id^{\ast}=(\id^{\ast}_1,\dots,\id^{\ast}_{\ell})$, $\cal{C}$ sets
		$$\F'_{\id}=[\A|\A_0 + \FRD(\mathsf{H}_{\id})\G]=[\A|\A\R_{0}+(\FRD(\mathsf{H}(\id))- \FRD(\mathsf{H}(\id^{\ast})))\G].$$
		Since $\mathsf{H}$ is a collision resistant hash function, $\mathsf{H}(\id)\neq \mathsf{H}(\id^{\ast})$ even if $\id$ is a prefix of $\id^{\ast}$ and thus $\FRD(\mathsf{H}(\id))- \FRD(\mathsf{H}(\id^{\ast}))$ is an invertible matrix in $\ZZ^{n\times n}_q$. Challenger $\cal{C}$ samples
		$$\D'_{\id}\gets \SampleBasisRight(\F'_{\id},\R_{0},\FRD(\mathsf{H}(\id))-\FRD(\mathsf{H}(\id^{\ast}))\G,\sigma_1)$$
		then invokes the algorithm $\SamplePre$
		$$\D_{\id}\gets \SamplePre(\F'_{\id},\D'_{\id},\U_2 ,\sigma_{1})\in \ZZ^{(m+\omega)\times \lambda}_q$$
		and returns $\Tsk_{\id}=\D_{\id}$.
		
	    
	\end{itemize}
	Using Lemma~\ref{lem:ClosetoUniform}, we can easily prove that the matrices $\A_i$ ($0\le i\le d$) are statistically close to uniform. Hence, in the adversary's point of view, $\A_0,\A_1,\dots,\A_d$ in \textbf{Game 0} and \textbf{Game 1} are computationally indistinguishable.
	
    Next, we consider the responses to the secret key derivation queries $\cal{O}_{\Derive}$ and the tracing key queries $\cal{O}_{\TskGen}$. For secret key derivation queries $\cal{O}_{\Derive}$, Theorem~\ref{thm:Gauss} shows that when $\sigma_1\ge5\cdot s_1(\R)\cdot\omega(\sqrt{\log n})$, $\sigma_{\ell}\ge\|\widetilde{\T_{\id_h}}\|\cdot\omega(\sqrt{\log(m+{\ell}\omega)})$, the algorithms  $\SampleBasisRight$ and $\ExtendBasis$ generate a basis $\T_{\id}$ for $\Lp_q(\F_{\id})$ which is statistically close to the one generated in the original game. Similarly, the tracing keys generate by $\SampleBasisRight$ and $\SamplePre$ in \textbf{Game 1} have distribution statistically close to ones  in \textbf{Game 0}. 
	
	Since the master public key $\MPK$ and responses to key derivation queries and tracing key queries in \textbf{Game 1} are statistically close to those in \textbf{Game 0}, these games are statistically indistinguishable in the view of the adversary. Thus we have
	$$|\Pr[G_1]-\Pr[G_0]|\le \negl(\lambda).$$

	\item[\textbf{Game 2}.]~~
    In this game, we change the way the challenge ciphertext $\CT^*$ for the challenge identity $\id^{\ast}$ is created. Recall that, after receiving a message $\m\in\{0,1\}^{\lambda}$ from the adversary $\cal{A}$, the challenger $\cal{C}$ then selects a random bit $b \in\{0,1\}$.
 
    If $b =1$, $\cal{C}$ chooses a random identity $\id'$ in the identity space which is not identical to any query identities in \textbf{Phase 1}. $\cal{C}$ then runs $\Encrypt(\MPK,\id,\m$) and sends the resulting ciphertext $\CT_1^*$ to $\cal{A}$.\\   
    If $b=0$, the challenger $\cal{C}$ does the following steps to generate $\CT_0^{\ast}$ and sends it to $\cal{A}$. 
    \begin{enumerate}
	    \item Sample $\k\gets \{0,1\}^{\lambda}$.
	    \item Sample $\s\gets\ZZ^n_q$.
	    \item Choose noise vectors $\e_0\gets \cal{D}^m_{\ZZ,r}$, $\e_2\gets \cal{D}^{\lambda}_{\ZZ,r}$.
        \item Set $\c_0^T=\s^T\A+\e_0^T$, $\c_2^T=\s^T\U_1+\e_2^T+\m^T\left\lfloor\frac{q}{2}\right\rfloor$ and
        \begin{align*}
	    \c_1^T &\gets\ReRand (\R,\c_0^T,r,\tau)\\
		\c_3^T &\gets  \ReRand(\overline{\R},\c_0^T,r,\tau) +\left(\k\left\lfloor\frac{q}{2}\right\rfloor\right)^T\\
		\c_4^T & \gets \ReRand (\R_{0},\c_0^T,r,\tau)
		\end{align*}
        where $\R=[\R_1|\dots|\R_{\ell}]$.
	    \item Output $\CT_0^{\ast}=(\c_0,\c_1,\c_2,\c_3,\c_4,\k)$.
    \end{enumerate}
    Observe that $\c_0$ and $\c_3$ are distributed exactly as they as in the previous game. Since
     \begin{align*}
	    \F_{\id^{\ast}} &=[\A|\A_1+\FRD(\id^{\ast}_1)\G|\dots|\A_{\ell}+\FRD(\id^{\ast}_{\ell})\G]\in \ZZ^{n\times (m+\ell\omega)}_q\\
		&=[\A|\A\R_1|\dots|\A\R_{\ell}] = [\A|\A\R]
		\end{align*}
	by  Lemma~\ref{lem:ReRand}, we get
	\begin{align*}
	   \c_1^T &=\ReRand (\R=[\R_1|\dots|\R_{\ell}],\c_0^T =\s^T\A +\e_0^T,r,\tau)
		=\s^T\A\R +\e^T_1, \\
	\c_3^T &=\ReRand (\overline{\R},\c_0^T =\s^T\A +\e_0^T,r,\tau)+\left(\k\left\lfloor\frac{q}{2}\right\rfloor\right)^T\\
		&=\s^T\A\overline{\R} +\e^T_3 =\s^T\U_2 +\e^T_3+\k^T\left\lfloor\frac{q}{2}\right\rfloor, \\
	\c_4^T &=\ReRand (\R_{0},\c_0^T =\s^T\A +\e_0^T,r,\tau)\\
		&=\s^T\A\R_{0} +\e^T_4 =\s^T(\A_0+\FRD(\mathsf{H}(\id^{\ast}))\G) +\e_4^T
	\end{align*}
	where the distribution of $\e_1,\e_3$ and $\e_4$ are statistically close to $\cal{D}^{\ell\omega}_{\ZZ,2r\tau}$, $\cal{D}^{\lambda}_{\ZZ,2r\tau}$ and $\cal{D}^{\omega}_{\ZZ,2r\tau}$, respectively. So we yields that \textbf{Game 1} and \textbf{Game 2} are statistically close in the adversary's point of view, the adversary's advantage against \textbf{Game 2} will be the same as \textbf{Game 1}.
	$$|\Pr[G_2]-\Pr[G_1]|\le \negl(\lambda).$$
	
    
\end{description}
Theorem~\ref{thm:ANON} then follows from the reduction from the LWE problem by the following lemma. 

\begin{lem}\label{lem:LWEreduction} If there exists an adversary $\cal{A}$ that wins the Game 2 with non-negligible advantage  then there is an adversary $\cal{B}$ that solves the $\LWE$ problem, i.e.,  $\Adv_{\cal{A}}^{Game 2} \le \Adv_{\cal{B}}^{\LWE_{n,m+\lambda,q,r}}(\lambda) $ for some $\LWE$ adversary $\cal{B}$.
\end{lem}

\textit{Proof of Lemma~\ref{lem:LWEreduction}.} Recall that an $\LWE$ problem instance is provided as a sampling oracle $\cal{O}$. 
$\cal{B}$ requests from oracle $\cal{O}$ and receives a  decisional $\LWE_{n,m+\lambda,q,r}$ problem sample $(\C,\c^T = \u^T+\e^T)$ where $\C$ is a random matrix in $\ZZ^{n\times (m+\lambda)}_q$, $\c\in\ZZ^{m+\lambda}$ and $\e$ is sampled from the distribution $\cal{D}^{m+\lambda}_{\ZZ,r}$. $\cal{B}$ needs to decide whether $\u$ is truly random $\cal{O}_\$$ or a noisy pseudo-random $\cal{O}_s$ for some secret random $\bf{s}\in \ZZ^n_q$ such that $\u^T = \s^T\C$. $\cal{B}$ simulates \textbf{Game 2} with adversary $\cal{A}$ and uses the guess from $\cal{A}$ to respond LWE challenges.\\

After receiving the challenge identity $\id^{\ast}=(\id^{\ast}_1,\dots,\id^{\ast}_{\ell})$ ($\ell\le d$) from $\cal{A}$, $\cal{B}$ constructs the simulator as follows:

\begin{itemize}
\item Split $\C=[\A|\U_1]$ where $\A\in\ZZ^{n\times m}_q$ and $\U_1\in\ZZ^{n\times \lambda}_q$.
\item Sample $\R_0,\R_1,\dots,\R_d \gets\cal{D}^{m\times \omega}_{\ZZ,\omega(\sqrt{\log n})}$, $\overline{\R}\gets \cal{D}^{m\times \lambda}_{\ZZ,\omega(\sqrt{\log n})}$ and set $\R=[\R_1|\dots|\R_{\ell}]$.
\item Set $\A_0\gets \A\R_0- \FRD(\mathsf{H}(\id^{\ast}))\G$.
\item Set $\A_i \gets \A\R_i- \mathsf{H}(\id_i^{\ast})\G$ for $i= 1,\dots,\ell$ and $\A_i \gets \A\R_i$ for $\ell<i\le d$.
\item Set $\U_2 \gets \A\overline{\R}$ and sample $\U_1\gets \ZZ^{n\times \lambda}_q$. 
		\item Send the master public key 
	$$\MPK=(\A,\A_0,\A_1,\dots,\A_d,\U_1,\U_2)$$
	 to $\mathcal{A}$ and keep $\R_0,\R_1,\dots,\R_d, \overline{\R}$ secret.
\item Respond to the key derivation queries and tracing key queries as in Game 2.
\item Split $\c^T = (\bar{\c}^T|\Tilde{\c}^T) \in \ZZ_q^{m+\lambda}$ where $\Bar{\c}^T=\Bar{\u}^T + \Bar{\e}^T \in \ZZ_q^m $ and $\Tilde{\c}^T=\Tilde{\u}^T+\Tilde{\e}^T\in\ZZ^{\lambda}_q$.
\item Create the challenge ciphertext $\CT^{\ast}$:
    \begin{enumerate}
	    \item Sample $\k\gets \{0,1\}^{\lambda}$.
        \item Set $\c_0^T=\Bar{\c}^T$, $\c_2^T=\Tilde{\c}^T+\m^T\left\lfloor\displaystyle{\frac{q}{2}}\right\rfloor$. 
        \item Set
       \begin{align*}
	    \c_1^T &\gets\ReRand (\R,\c_0^T,r,\tau)\\
		\c_3^T &\gets  \ReRand(\overline{\R},\c_0^T,r,\tau) +\k^T\left\lfloor\frac{q}{2}\right\rfloor\\
		\c_4^T & \gets \ReRand (\R_{0},\c_0^T,r,\tau)
		\end{align*}
        \item Send  $\CT^{\ast}=(\c_0,\c_1,\c_2,\c_3,\c_4,\k)$ to $\cal{A}$.
    \end{enumerate}
\end{itemize}

When $\LWE$ oracle is pseudorandom (i.e. $\cal{O}=\cal{O}_s)$ then $\c_0^T=\s^T\A+\e_0^T$, $\c_2^T=\s^T\U_1+\e^T_2+\m^T\left\lfloor\displaystyle{\frac{q}{2}}\right\rfloor$, meaning that $\CT^{\ast}$ is a valid challenge ciphertext that encrypts challenge message $\m$ for the target identity $\id^*$.\\
When $\LWE$ oracle is a random oracle (i.e. $\cal{O}=\cal{O}_\$$), $\c^T$ is uniformly random in $\ZZ^{m+\lambda}_q$ and thus $\CT^{\ast}$ distributes as a ciphertext encrypted for a random identity in the identity space.
Indeed, we have
    \begin{align*}
	    \c_0^T &=\Bar{\u}^T+\e_0^T\\
	    \c_1^T &=\ReRand (\R,\c_0^T,r,\tau)=\Bar{\u}^T\R+\e_1^T\\
	    \c_2^T &=\Tilde{\u}^T+\e_2^T+\m^T\left\lfloor\displaystyle{\frac{q}{2}}\right\rfloor \\
	    \c_3^T&=\ReRand(\overline{\R},\c_0^T,r,\tau) +\m^T\left\lfloor\frac{q}{2}\right\rfloor=\bar{\u}^T\overline{\R}+\e_3^T +\k^T\left\lfloor\frac{q}{2}\right\rfloor\\
        \c_4^T&=\ReRand (\R_{0},\c_0^T,r,\tau)=\Bar{\u}^T\R_0+\e_4^T
    \end{align*}
where $\e_1$, $\e_3$ and $\e_4$ are statistically close to $\cal{D}^{\ell\omega}_{\ZZ,2r\tau}$, $\cal{D}^{\lambda}_{\ZZ,2r\tau}$ and $\cal{D}^{\omega}_{\ZZ,2r\tau}$, respectively. Since ${\u}=(\bar{\u}|\Tilde{\u})$ is a random vector, the following distributions are negligibly close by using Lemma~\ref{lem:ClosetoUniform}:
$$(\A,\A\R,\A\overline{\R},\A\R_{0},\Bar{\u}^T,\Bar{\u}^T\R,\Bar{\u}^T\overline{\R},\Bar{\u}^T\R_{0})
\approx(\A,\B_{\id'},\U_2,\A_0+\FRD(\mathsf{H}(\id^{\ast}))\G,\Bar{\u}^T,\u_1^T,\u_2^T,\u_3^T)$$ 
where $\A$ is a random matrix in $\ZZ^{n\times m}_q$, $\R_0,\R_1,\dots,\R_d \gets\cal{D}^{m\times \omega}_{\ZZ,\omega(\sqrt{\log n})}$, $\overline{\R}\gets \cal{D}^{m\times \lambda}_{\ZZ,\omega(\sqrt{\log n})}$ , $\R=[\R_1|\dots|\R_{\ell}]$ ($\ell\le d$), $\B_{\id'} = [\A_1+\FRD(\id'_1)\G|\dots|\A_{\ell}+\FRD(\id'_{\ell})\G]$ for a random identity $ \id'=(\id'_1,\ldots,\id'_{\ell})$ of level $\ell$, and $\u_1\in\ZZ^{\ell\omega}_q$, $\u_2\in\ZZ^{\lambda}_q$, $\u_3\in\ZZ^{\omega}_q$ are uniformly random vectors. Therefore, in the view of the adversary $\cal{A}$, when the LWE oracle is random, $\CT^{\ast}$ distributes as a ciphertext encrypted message $\m$ for a random identity. This implies that
    $$\Adv_{\cal{A}}^{Game 2} \le \Adv_{\cal{B}}^{\LWE_{n,q,m+\lambda,r}}(\lambda). $$
\end{proof}

\subsection{Proof of Ciphertext Indistinguishability}
Finally, we will prove that our proposed AHIBET scheme is $\IND$ secure in the standard model. Recall that indistinguishable from random meaning that the challenge ciphertext encrypted for a given message $\m^{\ast}$ is computationally indistinguishable from a challenge ciphertext encrypted for a random message $\m$ in the message space on the same challenge identity $\id^{\ast}$.
\begin{theorem}\label{thm:IND}
    The AHIBET scheme $\Pi := (\Setup,\Extract,\Derive,\Encrypt,\Decrypt,\TskGen,\TkVer)$ with parameters $(\lambda,n,m,q,\omega,\bar{\sigma},\tau,\alpha,r)$ as in~\eqref{eq:params}  is $\IND$ secure for the maximal hierarchy depth $d$ provided that the hardness $\LWE_{n,q,m+\lambda,r}$ assumption holds.
\end{theorem}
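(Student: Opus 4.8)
The plan is to follow the same game‑hopping skeleton as the proof of Theorem~\ref{thm:ANON}: \textbf{Game 0} is the real $\IND$ experiment, \textbf{Game 1} replaces the honestly generated $\MPK$ and oracle answers by a ``punctured'' simulation that holds no master trapdoor, \textbf{Game 2} rewrites the challenge ciphertext for $\id^{\ast}$ as a deterministic re‑randomisation of a single dual‑Regev sample, and a final reduction shows that distinguishing the bit $b$ in \textbf{Game 2} breaks $\LWE_{n,m+\lambda,q,r}$. In \textbf{Game 1}, after receiving $\id^{\ast}=(\id^{\ast}_1,\dots,\id^{\ast}_{\ell})$, the challenger samples $\A\gets\ZZ_q^{n\times m}$, $\R_0,\R_1,\dots,\R_d\gets\cal{D}^{m\times\omega}_{\ZZ,\omega(\sqrt{\log n})}$, $\overline{\R}\gets\cal{D}^{m\times\lambda}_{\ZZ,\omega(\sqrt{\log n})}$, and publishes $\A_0=\A\R_0-\FRD(\mathsf{H}(\id^{\ast}))\G$, $\A_i=\A\R_i-\FRD(\id^{\ast}_i)\G$ for $1\le i\le\ell$, $\A_i=\A\R_i$ for $\ell<i\le d$, $\U_2=\A\overline{\R}$, and uniform $\U_1$; by Lemma~\ref{lem:ClosetoUniform} this $\MPK$ is statistically close to a real one. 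Key‑derivation queries on any $\id$ that is neither $\id^{\ast}$ nor a prefix of it are answered exactly as in Theorem~\ref{thm:ANON}: at the shallowest level $h$ where $\id_h\neq\id^{\ast}_h$ (or level $\ell+1$ if $\id$ extends $\id^{\ast}$) the corresponding block carries an invertible $\G$‑coefficient, so $\SampleBasisRight$ yields a short basis and $\ExtendBasis$ lifts it to $\Lp_q(\F_{\id})$. Tracing‑key queries on $\id\neq\id^{\ast}$ use $\F'_{\id}=[\A\mid\A\R_0+(\FRD(\mathsf{H}(\id))-\FRD(\mathsf{H}(\id^{\ast})))\G]$, where collision resistance of $\mathsf{H}$ makes $\FRD(\mathsf{H}(\id))-\FRD(\mathsf{H}(\id^{\ast}))$ invertible, so $\SampleBasisRight$ followed by $\SamplePre$ with target $\U_2$ returns $\D_{\id}$ of the correct distribution (Lemmas~\ref{SamplePre} and~\ref{thm:Gauss}). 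Hence \textbf{Game 0} and \textbf{Game 1} are statistically indistinguishable.

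In \textbf{Game 2} I would exploit that $\F_{\id^{\ast}}=[\A\mid\A\R_1\mid\dots\mid\A\R_{\ell}]=[\A\mid\A\R]$ with $\R=[\R_1\mid\dots\mid\R_{\ell}]$ and that $\A_0+\FRD(\mathsf{H}(\id^{\ast}))\G=\A\R_0$: the challenger samples $\c_0^T=\s^T\A+\e_0^T$ and sets $\c_1^T\gets\ReRand(\R,\c_0^T,r,\tau)$, $\c_4^T\gets\ReRand(\R_0,\c_0^T,r,\tau)$, $\c_3^T\gets\ReRand(\overline{\R},\c_0^T,r,\tau)+\k^T\lfloor q/2\rfloor$, and $\c_2^T=\s^T\U_1+\e_2^T+\m_b^T\lfloor q/2\rfloor$. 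By Lemma~\ref{lem:ReRand} these coincide with the real challenge distribution up to negligible statistical distance, so $|\Pr[G_2]-\Pr[G_1]|\le\negl(\lambda)$. For the reduction, given an $\LWE_{n,m+\lambda,q,r}$ sample $(\C,\c^T=\u^T+\e^T)$, the solver $\cal{B}$ writes $\C=[\A\mid\U_1]$ and $\c^T=(\bar{\c}^T\mid\tilde{\c}^T)$, runs the \textbf{Game 2} setup with this $\A$ and $\U_1$ (picking $\R_i,\overline{\R}$ itself and setting $\U_2=\A\overline{\R}$), answers all queries as in \textbf{Game 2}, and in the challenge phase chooses the random message $\m_1$ and bit $b$, sets $\c_0^T=\bar{\c}^T$, $\c_2^T=\tilde{\c}^T+\m_b^T\lfloor q/2\rfloor$, and derives $\c_1,\c_3,\c_4$ from $\c_0$ by $\ReRand$ as above. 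If the oracle is $\cal{O}_{\s}$ then $\bar{\c}^T=\s^T\A+\bar{\e}^T$ and $\tilde{\c}^T=\s^T\U_1+\tilde{\e}^T$, so $\CT^{\ast}$ is exactly a \textbf{Game 2} encryption of $\m_b$ for $\id^{\ast}$; if the oracle is $\cal{O}_{\$}$ then $\tilde{\c}$ is uniform and independent of $\c_0$, hence $\c_2^T$ is uniform and statistically independent of $\m_b$ while $\c_0,\c_1,\c_3,\c_4,\k$ do not depend on $b$, so $b$ is information‑theoretically hidden and $\cal{A}$ has advantage $0$. With $\cal{B}$ outputting $1$ precisely when $\cal{A}$'s guess equals $b$, the standard computation gives $\Adv^{\IND}_{\cal{A},\Pi}\le\Adv^{\LWE_{n,m+\lambda,q,r}}_{\cal{B}}+\negl(\lambda)$, which proves the theorem.

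The step I expect to be most delicate — and the only one genuinely departing from the anonymity argument — is keeping the tracing component consistent with the puncturing. In the $\IND$ game as formalised, $\cal{O}_{\TskGen}$ rejects $\id^{\ast}$, so puncturing $\A_0$ at $\mathsf{H}(\id^{\ast})$ costs nothing, collision resistance of $\mathsf{H}$ handles every remaining tracing query as in Theorem~\ref{thm:ANON}, and the dual‑Regev masking in $\c_2$ does the rest. If one instead wants to match the informal statement and also hand $\cal{A}$ the key $\Tsk_{\id^{\ast}}$, the tension is that one cannot simultaneously puncture $\A_0$ at $\id^{\ast}$ (needed to re‑randomise $\c_4^{\ast}$ from $\c_0$ without $\s$) and sample $\Tsk_{\id^{\ast}}=\D_{\id^{\ast}}$ with $\F'_{\id^{\ast}}\D_{\id^{\ast}}=\U_2$ from the correct Gaussian — the punctured $\F'_{\id^{\ast}}=[\A\mid\A\R_0]$ carries no $\G$‑trapdoor and the reduction has no trapdoor for $\A$. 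Handling this would require an extra hybrid that first replaces the non‑message blocks $\c_1^{\ast},\c_3^{\ast},\c_4^{\ast}$ of the challenge before the $\LWE$ instance is embedded only into $(\A,\U_1)$, together with a careful choice of whether to set $\A_0=\A\R_0$ or $\A_0=\A\R_0-\FRD(\mathsf{H}(\id^{\ast}))\G$ so that both $\Tsk_{\id^{\ast}}$ and $\c_4^{\ast}$ stay simulatable; that bookkeeping is where the real care lies.
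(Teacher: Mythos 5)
Your game skeleton matches the paper's, and for all queries on identities other than $\id^{\ast}$ your simulation is exactly what the paper does (the split into a separate ``uniform'' hybrid before the LWE embedding is only a cosmetic difference). The genuine gap is precisely the point you flag in your last paragraph and then leave unresolved: the tracing key $\Tsk_{\id^{\ast}}$ of the challenge identity. What distinguishes $\IND$ from plain anonymity is that the adversary may hold $\Tsk_{\id^{\ast}}$ and still must not learn the message, and the paper's proof does answer a tracing-key query on $\id^{\ast}$ itself. You assert that one cannot simultaneously puncture $\A_0$ at $\mathsf{H}(\id^{\ast})$ and hand out a correctly distributed $\D_{\id^{\ast}}$ with $\F'_{\id^{\ast}}\D_{\id^{\ast}}=\U_2$, and you defer the fix to an unspecified extra hybrid. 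In fact no extra hybrid is needed: the paper resolves it by reversing the sampling order. In its Game~1 the challenger first draws $\D_0\gets\cal{D}^{m\times\lambda}_{\ZZ,\sigma_1}$ and $\D_1\gets\cal{D}^{\omega\times\lambda}_{\ZZ,\sigma_1}$, sets $\overline{\R}=\D_0+\R_0\D_1$ and $\U_2=\A\overline{\R}$, while keeping $\A_0=\A\R_0-\FRD(\mathsf{H}(\id^{\ast}))\G$. Then
\[
[\A\mid\A_0+\FRD(\mathsf{H}(\id^{\ast}))\G]\begin{bmatrix}\D_0\\\D_1\end{bmatrix}=\A\D_0+\A\R_0\D_1=\A\overline{\R}=\U_2,
\]
so $\Tsk_{\id^{\ast}}$ is a valid tracing key with (statistically) the intended Gaussian distribution, $\U_2$ remains statistically uniform by Lemma~\ref{lem:ClosetoUniform}, and since $\overline{\R}$ is still a short matrix satisfying $\U_2=\A\overline{\R}$ (the paper checks $\|\overline{\R}\|<\tau$), the challenge component $\c_3^{\ast}$ is still produced as $\ReRand(\overline{\R},\c_0^T,r,\tau)+\k^T\lfloor q/2\rfloor$ exactly as in your Game~2.

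Without this step your reduction only establishes security of a weakened game in which $\Tsk_{\id^{\ast}}$ is withheld from the adversary, which is essentially a restricted form of the anonymity statement rather than the ciphertext-indistinguishability property the theorem is meant to capture. The rest of your argument (the puncturing of $\A_i$, the use of $\SampleBasisRight$ and $\ExtendBasis$ for key derivation, the collision-resistance argument for other tracing keys, and the final LWE embedding into $(\A,\U_1)$ with $\c_2$ masking $\m_b$) is sound and coincides with the paper's.
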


\begin{proof}
We will proceed the proof via a sequence of games where the \textbf{Game 0} is identical to the original $\IND$ game and the adversary has no advantage in winning the last game. Let $\mathcal{A}$ be a PPT adversary that attacks the AHIBET scheme $\Pi$ and has advantage $\Adv_{\cal{A},\Pi}^{\IND}=\epsilon$. We will then construct a simulator $\cal{B}$ that solves the LWE problem using $\cal{A}$.\\
In \textbf{Game $i$}, let $G_i$ denote the event that the adversary $\cal{A}$ win the game. The adversary's advantage in \textbf{Game $i$} is $|\Pr[G_i] - \frac{1}{2} |$.

\begin{description}
	\item[\textbf{Game 0}.]~~ This is the original $\IND$ game between the adversary $\cal{A}$ against our scheme and an $\IND$ challenger. 
	$$\Adv_{\cal{A},\Pi}^{\IND}= \left|\Pr[G_0] -\frac{1}{2}\right|=\left|\Pr[b'=b]-\frac{1}{2}\right|.$$

	\item[\textbf{Game 1}.]~~ \textbf{Game 1} is similar to \textbf{Game 0} except that we slightly modify the way that the challenger $\cal{C}$ generates the master public key $\MPK$ and responds to the key derivation oracles $\cal{O}_{\Derive}$ and tracing key oracles $\cal{O}_{\TskGen}$. Let $\id^{\ast}=(\id^{\ast}_1,\dots,\id^{\ast}_{\ell})$ $(\ell\le d)$ be the identity that $\cal{A}$ intends to attack. After receiving $\id^{\ast}$, $\cal{C}$ does:
	\begin{enumerate}
		\item Sample $\A\gets\ZZ_q^{n\times m}$, $\R_0,\R_1,\dots,\R_d \gets\cal{D}^{m\times \omega}_{\ZZ,\omega(\sqrt{\log n})}$ and $\D_0\gets\cal{D}^{m\times\lambda}_{\ZZ,\sigma_{1}}$, $\D_1\gets \cal{D}^{\omega\times\lambda}_{\ZZ,\sigma_{1}}$.
		\item Set $\A_i \gets \A\R_i- \FRD(\id^{\ast})\G$ for $i= 1,\dots,\ell$ and $\A_i \gets \A\R_i$ for $\ell<i\le d$.
		\item Set $\overline{\R} \gets \D_0+\R_0\D_1$.
		\item Set $\A_0\gets \A\R_{0}- \FRD(\mathsf{H}(\id^{\ast}))\G$, $\U_2 \gets \A \overline{\R}$ and sample $\U_1\gets \ZZ^{n\times \lambda}_q$. 
		\item Output the master public key
	$$\MPK=(\A,\A_0,\A_1,\dots,\A_d,\U_1,\U_2)$$
		and keep $\R_0,\R_1,\dots,\R_d,\overline{\R}$ secret.
	\end{enumerate}
	\begin{itemize}
	    \item 	The adversary $\cal{A}$ is not allowed to ask for the key derivation queries of the challenge identity $\id^{\ast}$ and its prefixes.
	    To respond to a key derivation query $\cal{O}_{\Derive}$ for an identity  $\id=(\id_1,\dots,\id_{k})$, $\cal{C}$ sets:
	     \begin{align*}
	    \F_{\id}=[\A|\A_1+\FRD(\id_1)\G|\dots|\A_{k}+\FRD(\id_{k})\G]\in \ZZ^{n\times (m+k\omega)}_q.
		\end{align*}
        \begin{itemize}
	    \item If $k\le \ell$, then 
	   $$\F_{\id} = [\A|\A \R_1 + (\FRD(\id_1)-\FRD(\id^{\ast}_1))\G| \ldots|\A \R_k + (\FRD(\id_k) - \FRD(\id^{\ast}_k))\G ].$$
	    
	    Let $h$ be the sallowest level where $\id_h\neq \id_h^{\ast}$ ($h\le k$). By the property of the full-rank difference encoding $\FRD$, $\FRD(\id_h)-\FRD(\id_h^{\ast})\in \ZZ^{n\times n}_q$ is an invertible matrix, $\cal{C}$ then samples
		$$\T_{\id_h}\gets \SampleBasisRight([\A|\A\R_h+(\FRD(\id_h)-\FRD(\id_h^{\ast}))\G],\R_h,\FRD(\id_h)-\FRD(\id_h^{\ast}),\sigma_1).$$
		If $h=k=1$, $\cal{C}$ returns $\SK_{\id} = \T_{\id_1}$.\\
		If $k>1$, $\cal{C}$ 
		uses algorithm $\ExtendBasis$ to extend the basis $\T_{\id_h}$ of $\Lp_q([\A|\A\R_h+(\FRD(\id_h)-\FRD(\id_h^{\ast}))\G])$ to a basis $\T_{\id}$ of $\Lp_q(\F_{\id})$ then returns $\SK_{\id} = \T_{\id}$.
		\item If $k>\ell$, then  
		$$\F_{\id}=[\A|\A \R_1|\ldots|\A \R_{\ell}|\A \R_{\ell+1} + \FRD(\id_{\ell+1})\G|\ldots|\A \R_k + \FRD(\id_k)\G] $$
		and $\FRD(\id_{\ell+1})\in \ZZ^{n\times n}_q$ is an invertible matrix. The challenger  $\cal{C}$ samples
		$$\T_{\id_{\ell+1}}\gets \SampleBasisRight([\A|\A\R_{
		\ell+1}+\FRD(\id_{\ell+1})\G],\R_{\ell+1},\FRD(\id_{\ell+1}),\sigma_1)$$
		and uses algorithm $\ExtendBasis$ to extend the basis $\T_{\id_{\ell+1}}$ of $\Lp_q([\A|\A\R_{\ell+1}+\FRD(\id_{\ell+1})\G])$ to a basis $\T_{\id}$ of $\Lp_q(\F_{\id})$. Finally, $\cal{C}$ returns $\SK_{\id} = \T_{\id}$.
	    \end{itemize}
	    \item 	To respond to the tracing key query $\cal{O}_{\TskGen}$ for $\id=(\id_1,\dots,\id_{k})\neq \id^{\ast}=(\id^{\ast}_1,\dots,\id^{\ast}_{\ell})$, $\cal{C}$ sets:
	   
		$$\F'_{\id}=[\A|\A\R_{0}+(\FRD(\mathsf{H}(\id))- \FRD(\mathsf{H}(\id^{\ast})))\G]$$
		Since $\mathsf{H}$ is a collision resistant hash function, $\mathsf{H}(\id)\neq \mathsf{H}(\id^{\ast})$ even if $\id$ is a prefix of $\id^{\ast}$ and thus $\FRD(\mathsf{H}(\id))- \FRD(\mathsf{H}(\id^{\ast}))$ is an invertible matrix in $\ZZ^{n\times n}_q$. The challenger $\cal{C}$ samples
		$$\D'_{\id}\gets \SampleBasisRight(\F'_{\id},\R_{0},\FRD(\mathsf{H}(\id))-\FRD(\mathsf{H}(\id^{\ast}))\G,\sigma_1)$$
		then invokes the algorithm $\SamplePre$
		$$\D_{\id}\gets \SamplePre(\F'_{\id},\D'_{\id},\U_2 ,\sigma_{1})\in \ZZ^{(m+\omega)\times \lambda}_q$$
		and returns $\Tsk_{\id}=\D_{\id}$.
	    
	    To respond to the tracing key query of $\id^{\ast}=(\id^{\ast}_1,\dots,\id^{\ast}_{\ell})$, the challenger $\cal{C}$ sets:\\
	    $$\Tsk_{\id^{\ast}}=\D_{\id^{\ast}}=\begin{bmatrix}
        \D_0 \\\D_1 
        \end{bmatrix}\in \ZZ^{(m+\omega)\times \lambda}$$
        so that $$[\A|\A_0+\FRD(\mathsf{H}(\id^{\ast}))\G]\D_{\id^{\ast}} = [\A|\A \R_0] \begin{bmatrix}
        \D_0 \\\D_1 
        \end{bmatrix} = \A \D_0 + \A \R_0 \D_1 = \A \overline{\R} = \U_2.$$
        Using Lemma~\ref{lem:ClosetoUniform} with $\sigma_{1}\ge O(\sqrt{m})\cdot \omega(\log n)$ we have: the distribution of $\U_2$ is statistically close to uniform over $\ZZ^{\lambda}_q$ and $\D_{\id^{\ast}}$ has the distribution $\cal{D}^{(m+\omega)\times \lambda }_{\ZZ,\sigma_{1}}$. 
        Since the master public key $\MPK$ and responses to key derivation queries and the tracing key queries are statistically close to those in \textbf{Game 0}, the adversary's advantage in \textbf{Game 1} is at most negligibly different form its advantage in \textbf{Game 0}.
    	$$|\Pr[\G_1]-\Pr[\G_0]|\le \negl(\lambda).$$
	\end{itemize}

	\item[\textbf{Game 2}.]~~ \textbf{Game 2} is similar to \textbf{Game 1} except that we modify the construction of the challenge ciphertext $\CT^*$. 
	The challenger $\cal{C}$ sets $\m_0=\m^{\ast}$, chooses a random bit $b\in\{0,1\}$, a random message $\m_1$ in the message space and generates the ciphertext $\CT_b^{\ast}$ for a message $\m_b\in\{0,1\}^{\lambda}$ of the identity $\id^{\ast}$ as follows:
    \begin{enumerate}
	    \item Sample $\k\gets \{0,1\}^{\lambda}$.
	    \item Sample $\s\gets\ZZ^n_q$.
	    \item Choose noise vectors $\e_0\gets \cal{D}^m_{\ZZ,r}$, $\e_2\gets \cal{D}^{\lambda}_{\ZZ,r}$.
        \item Set $\c_0^T=\s^T\A+\e_0^T$, $\c_2^T=\s^T\U_1+\e_2^T+\m^T_b\left\lfloor\displaystyle{\frac{q}{2}}\right\rfloor$ and
        \begin{align*}
	    \c_1^T &\gets\ReRand (\R,\c_0^T,r,\tau)\\
		\c_3^T &\gets  \ReRand(\overline{\R},\c_0^T,r,\tau) +\left(\k\left\lfloor\frac{q}{2}\right\rfloor\right)^T\\
		\c_4^T & \gets \ReRand (\R_{0},\c_0^T,r,\tau)
		\end{align*}
        where $\R=[\R_1|\dots|\R_{\ell}]$ and $\overline{\R}= \D_0+\R_0\D_1$. Note that $\|\overline{\R}\|<\tau$ by the way that the game generates the matrices $\R_0,\D_0$ and $\D_1$.
	    \item Output $\CT_0^{\ast}=(\c_0,\c_1,\c_2,\c_3,\c_4,\k)$.
    \end{enumerate}
    We have that $\c_0$, $\c_2$ are distributed exactly as in the previous game, besides
	\begin{align*}
	   \c_1^T &=\ReRand (\R=[\R_1|\dots|\R_{\ell}],\c_0^T =\s^T\A +\e_0^T,r,\tau)\\
		&=\s^T\A\R +\e^T_1 \\
	\c_3^T &=\ReRand (\overline{\R},\c_0^T =\s^T\A +\e_0^T,r,\tau)+\k^T\left\lfloor\frac{q}{2}\right\rfloor=\s^T\A\overline{\R} +\e^T_2 +\k^T\left\lfloor\frac{q}{2}\right\rfloor\\
		&=\s^T\A(\D_0+\R_0\D_1) +\e^T_3+\k^T\left\lfloor\frac{q}{2}\right\rfloor=\s^T\U_2 +\e_3^T+\k^T\left\lfloor\frac{q}{2}\right\rfloor \\
	\c_4^T &=\ReRand (\R_0,\c_0^T =\s^T\A +\e_0^T,r,\tau)\\
		&=\s^T\A\R_0 +\e^T_4 =\s^T(\A_0+\FRD(\mathsf{H}(\id^{\ast})\G) +\e_4^T
	\end{align*}
	where the distribution of $\e_1$, $\e_3$ and $\e_4$ are statistically close to $\cal{D}^{\ell\omega}_{\ZZ,2r\tau}$, $\cal{D}^{\lambda}_{\ZZ,2r\tau}$ and $\cal{D}^{\omega}_{\ZZ,2r\tau}$, respectively. So we yields that \textbf{Game 1} and \textbf{Game 2} are statistically close in the adversary's point of view, the adversary's advantage against \textbf{Game 2} will be the same as \textbf{Game 1}.
	$$|\Pr[G_2]-\Pr[G_1]|\le \negl(\lambda).$$

	\item[\textbf{Game 3}.]~~ In this game, we keep changing how the challenge ciphertext is created. The challenger $\cal{C}$ does:
    \begin{enumerate}
	    \item Sample $\k\gets \{0,1\}^{\lambda}$.
	    \item Sample $\Bar{\u}\gets\ZZ^m_q$ and $\widetilde{\u}\gets\ZZ^{\lambda}_q$
	     \item Choose noise vectors $\e_0\gets \cal{D}^m_{\ZZ,r}$, $\e_2\gets \cal{D}^{\lambda}_{\ZZ,r}$.
        \item Set $\c_0 =\Bar{\u}+\e_0$, $\c_2=\widetilde{\u}+\e_2+\m_b\left\lfloor\frac{q}{2}\right\rfloor$ and
        \begin{align*}
	    \c_1^T &\gets\ReRand (\R,\c_0^T,r,\tau)\\
		\c_3^T &\gets  \ReRand(\overline{\R},\c_0^T,r,\tau) +\left(\k\left\lfloor\frac{q}{2}\right\rfloor\right)^T\\
		\c_4^T & \gets \ReRand (\R_0,\c_0^T,r,\tau)
		\end{align*}
        where $\R=[\R_1|\dots|\R_{\ell}]$.
	    \item Output $\CT_0^{\ast}=(\c_0,\c_1,\c_2,\c_3,\c_4,\k)$.
    \end{enumerate}

Observe that the ciphertext $\c_2=\widetilde{\u}+\e_2+\m_b\left\lfloor\displaystyle{\frac{q}{2}}\right\rfloor$ in \textbf{Game 3} is uniformly random over $\ZZ^{\lambda}_q$. Therefore, the ciphertext is independent from $\m_b$ in the adversary $\cal{A}$'s view. Hence, both $\CT_0^{\ast}$ and $\CT_1^{\ast}$ is statistically close to the uniform distribution over the ciphertext space, and the adversary $\cal{A}$ has no advantage in winning the game. We have
$$\left|\Pr[G_3]-\displaystyle{\frac{1}{2}}\right|\le \negl{(\lambda)}.$$

Moreover, using the same reduction technique as in the Anonymity Game in the previous subsection, we can construct a simulator $\cal{B}$ that solves $\LWE$ problem if adversary $\cal{A}$ is able to distinguish between \textbf{Game 2} and \textbf{Game 3}. Therefore we have
$$|\Pr[G_3]-\Pr[G_2]|\le \Adv^{\LWE_{n,q,m+\lambda,r}}_{\cal{B}}(\lambda),$$
which completes the proof of Theorem 2.

\end{description}

\end{proof}

\section{Conclusion}
In this paper, we propose a Lattice-based Anonymous Hierarchical Identity-Based Encryption scheme with Traceable Identities (AHIBET) and prove that our scheme is secure in the standard model based on the decisional LWE assumption.

\bibliographystyle{alpha}
\bibliography{latbib}

\end{document}